\theoremstyle{plain}
\newtheorem{theorem}{Theorem}
\newtheorem*{theorem*}{Theorem}
\newtheorem*{corollary*}{Corollary}
\theoremstyle{definition}
\newtheorem{remark}{Remark}
\newtheorem*{remark*}{Remark}
\def\O{\Omega}
\def\s{\sigma}
\def\l{\lambda}
\def\pfi{\varphi}
\def\dphi{\dot{\phi}}
\def\ddphi{\ddot{\phi}}
\def\dth{\dot{\theta}}
\def\ddth{\ddot{\theta}}
\def\dpsi{\dot{\psi}}
\def\ddpsi{\ddot{\psi}}
\def\dchi{\dot{\chi}}
\def\ddchi{\ddot{\chi}}
\def\dxi{\dot{\xi}}
\def\ddxi{\ddot{\xi}}
\def\o{\omega}
\def\d{\delta}
\def\ds{\displaystyle}
\date{}
\begin{document}

\title{Swinging a playground swing\\ \Large Torque controls for inducing sustained oscillations}

\author{Sergiy Koshkin\\
 Department of Mathematics and Statistics\\
 University of Houston-Downtown\\
 One Main Street, S-713\\
 Houston, TX 77002\\
 e-mail: koshkins@uhd.edu \and
 Vojin Jovanovic\\
 Customer Performance Center\\
 Smith Bits, A Schlumberger Co.\\
 1310 Rankin Road\\
 Houston, TX 77073\\
 e-mail: fractal97@gmail.com}
\maketitle
\begin{abstract}
Models of a playground swing have been studied since the 1960s. However, in most of them, the position of the swinger is controlled directly. This simplifies the problem but hides the mechanics of torques applied to keep the swing moving in a regular pattern. This article studies these mechanics. Two models of a swing with torques as controls that we consider are identical to popular models of modern robotics: the Acrobot and reaction wheel pendulum. However, the control task of sustaining the swing's regular oscillations by a static feedback control is new and challenging, especially when damping in the joint connecting the swing to the frame is considered. We develop two types of controls to accomplish this task. One works for small damping and is based on linearizing the undamped system by a suitable preliminary feedback control. The other works for large damping. In the steady state, the resulting closed-loop system describes a harmonically driven damped pendulum (a simple system known for its complex behavior), including chaotic motion for some parameter values. To address such complexities, we build free parameters into the controls, then adjust them based on simulations to avoid chaos and achieve regular oscillations that are seen on playgrounds.
\medskip

\textbf{Keywords}: underactuated system, orbital stabilization, sustained oscillations, limit cycle, static state feedback, feedback linearization, bounded zero dynamics, swing up, Acrobot, reaction wheel pendulum, driven damped pendulum
\bigskip

\textbf{MSC}: 70K05 34C15 70K40 70E17 70Q05 93D25 93C85
\end{abstract}

\newpage

\section{Introduction}

The main goal of this paper is to design torque controls for a common model of a playground swing, the model of swinging in the sitting position proposed by Case and Swanson \cite{CS}, with damping added for more realism. As it turns out, the addition of damping makes the problem much more complex from the analytic standpoint. Modeling human control of playground swings has venerable history going back at least to 1960s, see \cite{CS, WRR} and references therein. However, even aside from the fact that damping or friction are rarely included, all models in the literature, to the best of our knowledge, take the  body's position as the control input. For swinging in the sitting position it is described by the angle between the body and the swing's axis (rods, chains or cables attaching the seat to the frame). This is a natural simplification given that human swingers can use their biophysical machinery to generate torques for muscle movements that change and stabilize the requisite body angle. But much complexity is hidden in these built-in mechanisms, and if we want to understand them, or design an electromechanical device that would relieve parents of the chore of swinging their kids, then we can not rely on biophysics to do the work. 

Aside from the recreational motivation, torque control of the swing is closely related to control tasks in modern robotics, such as bipedal walking and brachiating for robots with rotating joints \cite{Asan,NFK,RbtC,Tzaf}. Simplified models of robots, such as the Acrobot and the Pendubot, have been developed to design and analyze controls of similar nature \cite[ch. 5]{FL},\cite{Spong,KCS,Mur}, as well as component models such as the reaction wheel pendulum \cite{SCL,ACK,Grit,ZCM} and Furuta pendulum \cite{FSGA}. When there is no damping in the swing's joints, the equations of motion of the Case-Swanson model with the torque control are identical (up to the signs of constants) to those of the Acrobot, and for the simplified model that we focus on (the balanced swing) to those of the reaction wheel pendulum. Another related task is synchronizing a pair of two connected pendulums \cite{Bl1,JK}, where one is equipped with clock escapement. 

These systems share common features of underactuation and trigonometric nonlinearity that make them convenient test cases for the methods of nonlinear geometric control theory \cite{FL,Isid,Resp}. Underactuation means that only some degrees of freedom are directly affected by control inputs, which forces one to balance achieving desired behavior with maintaining reasonable internal dynamics of the uncontrolled degrees of freedom. In robotics, underactuation comes up when some joints do not have an associated motor to move them. In human execution, performing such tasks is often associated with dexterity. In part, we use the problem of playground swinging to investigate what sort of controls an underactuated robot would require for some tasks easily performed by humans.

In the Case-Swanson model the body is represented by a rigid dumbbell made of three masses attached to a rigid axis that connects it to a fixed pivot. The axis and the dumbbell can rotate around the upper and the lower joint, respectively. We add damping to both joints, but it is adding it to the unactuated upper joint that complicates the problem considerably. The open loop equations of motion are formally identical to those of the damped double pendulum, and, with the torque control, to those of the damped Acrobot. For analytic simplicity, in most of the paper we consider the case of the \noindent\textbf{\textit{balanced swing}}, where the moments of inertia of the side masses are equal, a variant singled out already in \cite{CS}. The equations of motion reduce to those of the damped reaction wheel pendulum. Unbalanced swinging in the sitting position can be treated as a small perturbation of the balanced case (Section ``Unbalanced swing"). 

While the control system is familiar from robotics, the control task is unusual. It is not stabilization near an equilibrium, but rather inducing sustained oscillations of  prescribed amplitude and frequency (by a static state feedback control law). Sometimes such a task is interpreted as a version of trajectory tracking for a selected periodic trajectory  \cite{CS,ACK,FSGA,Grit}, but this artificially fixes the phase of oscillations that a swinger does not care about and would not want to spend energy maintaining. Therefore, a more natural control goal is {\it orbital stabilization}, where the orbit is a stable limit cycle \cite{AIb,Hak12,Hak17}. Practical considerations lead to further {\it design criteria} that the desired control must satisfy:
\begin{enumerate}
\item In the closed loop system the angle of the swing axis with the vertical approaches a stable limit cycle of oscillations with a single dominant frequency, without beats or chaotic motion.

\item The dynamics of the body angle with the swing axis is bounded and oscillatory with a single dominant frequency.

\item The transient motion, before the limit cycle is reached, is reasonably short.
\end{enumerate}
Meeting all three desiderata turns out to be surprisingly challenging. The first two criteria are hard to express in analytic form, and we are not aware of general results for nonlinear systems that would guarantee meeting them. Accordingly, our design strategy is semi-analytic. We look for a control ansatz that meets part of our design criteria and contains sufficiently many parameters that are adjusted in simulations to meet the rest. Such hybrid approach is typical for studying systems that are capable of chaotic behavior, like harmonically driven damped pendulum \cite{BB,BG}. Our adaptation of it to control design that sustains regular oscillations offers a novel methodology for controlling underactuated mechanical systems with friction that appear in robotics.

Most of the cited work on control of robot arms and pendula concerns the swing up task and stabilization near the unstable equilibrium. Many classical control techniques employed there, such as linearization around a stable equilibrium, pole placement and standard stabilizing controllers such as LQR (including its variants for trajectory tracking), are not suitable for our control task of orbital stabilization. However, the technique of feedback linearization \cite{Isid,Resp}, that has been applied to control of swing up in \cite{Spong, Mur} and of the {\it undamped} reaction wheel pendulum in \cite{SCL,Griz}, is central to our approach. Our control design is closest to that of \cite{AIb}, that uses dynamic state feedback construction to induce limit cycles with bounded zero dynamics in feedback linearizable multi-dimensional systems. Its simplification to a static state feedback form, and application to control of swinging appear to be new. Generally speaking, to make use of feedback linearization one has to get lucky twice. First, the studied system has to be feedback linearizable, or at least close to one in some sense. And second, the new variables have to be related to the original ones in a way that preserves desirable behavior. This second condition is not met in our case by an alternative construction of \cite{Hak17} based on backstepping. It is one of the main observations of this paper that we do get lucky with the balanced swing.
\medskip

The paper is organized as follows. In Section ``Playground swing: equations of motion and naive controls" we show how several natural ideas lead to controls that fail at least one of the design criteria. It also serves as preliminaries, introducing the model of the swing along with some useful background and terminology.

In Section ``Feedback linearization and imposing a limit cycle" we generalize the standard construction of a static state feedback law that imposes a limit cycle on a single second order equation \cite{Hak12,Iwa} to a linear system of any dimension in the Brunovsky canonical form. It is motivated by the dynamic state feedback construction of \cite{AIb}. The law imposes a limit cycle on a two-dimensional subsystem, while producing bounded zero dynamics on a complementary subsystem. Since the system is linear this dynamics is oscillatory. 

In Section ``Feedback linearization control: analysis and variations" we analyze the resulting control law for the balanced swing. In principle, our construction includes an arbitrary energy function that determines the shape of the limit cycle, but we always chose it to be quadratic for simplicity, and did not investigate closed loop behavior when it is varied beyond shifting the center of the cycle. So our limit cycle is an ellipse and we have five control parameters to adjust: two describe the ellipse, and three control gains determine transient behavior. Using simulations, we adjust the parameter values to meet our design criteria for realistic playground swings and swingers. Shifting the center of the ellipse is used to incorporate an additional constraint on the range of the body's angle, which is especially salient for swings with back benches that prevent backward leaning. In simulations, the zero damping law (with minor corrections) turns out to work quite well even when the damping is non-zero but relatively small. But the damped system is no longer feedback linearizable, and the resulting closed loop system is analytically opaque. 

To remedy the situation somewhat, in Section ``Damping corrections for small amplitudes" we linearize the balanced swing for small amplitudes and run the process of reduction to the Brunovsky canonical form and imposing the limit cycle on it in parallel to the non-linear zero damping case. This allows us to obtain damping corrections analytically, and to track what happens to the closed loop system (for small amplitudes) when the damping increases. The linearized closed loop system can be transformed into an alternative form that makes the physics behind the control more transparent. One can think of it as imposing a limit cycle on a combination of the swing and the body angles and their velocities designed to periodically drive the swing angle, while adding artificial damping (and stiffness) to a complementary degree of freedom. For the linearized system this combination is a linear combination that can be explicitly written for all values of damping and control parameters. But for the non-linear system it exists (in exact form) only in the case of zero damping (as we already know), and also in one other case. 

This second case is of independent interest because it does not involve feedback linearization and a small parameter perturbation. We study it in Section ``From balanced swing to driven damped pendulum". Physically, the control can be interpreted as imposing a limit cycle on a linear combination of the angles only, without involving velocities, but the analysis only works for a special choice of the ratio of the last two control gains. The resulting closed loop system for the swing angle turns out to be (in the steady state) the damped pendulum driven by a harmonic excitation. As is well known, this is one of the simplest systems with chaotic behavior, but its appearance is not entirely unexpected since the double pendulum itself is already such a system. If the driving frequency or the driving amplitude is increased, while all other parameters are kept fixed, the pendulum dynamics undergoes a cascade of period doubling bifurcations that ends in regions of chaotic behavior interspersed with returns to regular oscillations at yet higher values \cite[6.3]{BB}, \cite[3.3.5]{BG}. 

Obviously, meeting our second design criterion requires avoiding chaos, or even period doubling, and having one less control gain makes the task more difficult. It turns out that for realistic values of lengths and masses this is only possible when the damping in the upper joint is relatively large to begin with. We estimate the requisite range of damping values using the results of \cite{Mil}. Outside of this range in simulations we encountered long transient times, oscillations without dominant frequency (beats), and sometimes numerical instability that indicates sensitivity to initial values characteristic of chaotic dynamics. 

In Section ``Unbalanced swing" we represent the unbalanced swing as a perturbation of the balanced swing, and show, in simulations, that balanced controls continue to perform well for small imbalances. Finally, in Conclusions we summarize our results, and discuss their limitations and directions of future work they point to. The upshot is that  the ``well-known" playground swing still holds a lot of unanswered questions.

\section{Playground swing: equations of motion and naive controls}\label{Naive}

There is a general agreement that two mechanisms contribute to keeping a playground swing going: moving the body's center of mass up and down and rotating it around the pivot (angular momentum transfer). In idealized models of swinging these two mechanisms are often studied separately, as swinging in the standing and in the sitting position, respectively. We will start by considering the model of a swing with a human in the sitting position proposed by Case and Swanson \cite{CS}.

\subsection{Case-Swanson model}

Human body is modeled by a rigid dumbbell with three masses, and the middle mass is attached to the end of a damped pendulum, see Figure \ref{CSswing}. The angle of the swing axis with the vertical is denoted $\phi$, and the angle of the dumbbell with the axis is $\theta$.
\begin{figure}[!ht]
\centering
\includegraphics[width=2.0in]{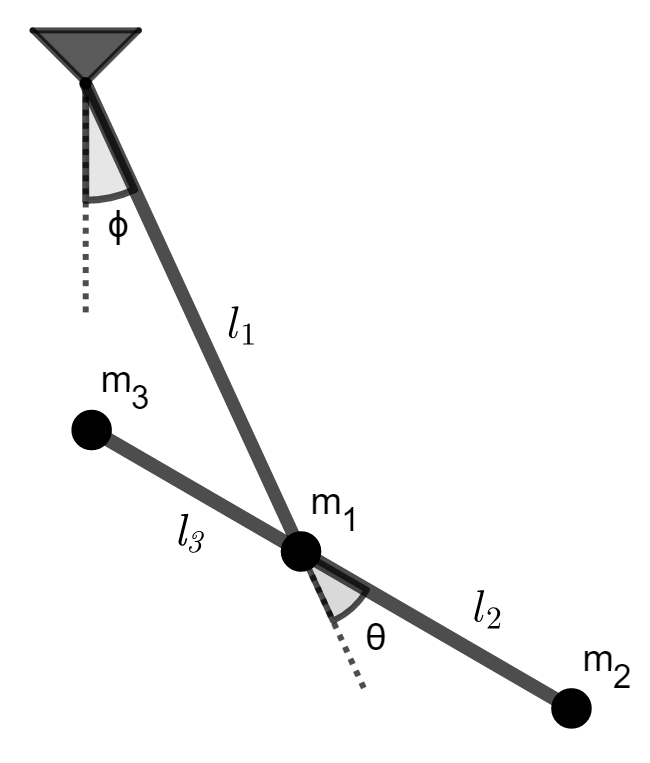}
\vspace{-1em}
\caption{\label{CSswing}A three mass model of a swing with a human in a seated position.} 
\end{figure}
Following \cite{CS}, we introduce the combined mass $M:=m_1+m_2+m_3$, the moments of inertia $I_1:=Ml_1^2$ and $I_2:=m_2l_2^2+m_3l_3^2$, and the imbalance coefficient $N=m_2l_2-m_3l_3$. Then the Lagrangian of the system is
\begin{equation}\label{CSLagr}
L=\frac12 I_1\dot{\phi}^2+\frac12 I_2(\dot{\phi}+\dot{\theta})^2+Nl_1\cos\theta\,\dot{\phi}(\dot{\phi}+\dot{\theta})+Mgl_1\cos\phi+Ng\cos(\phi+\theta).
\end{equation}
Note that the form of the Lagrangian is identical to the form of it for the \noindent\textbf{\textit{double pendulum}}, which is obtained by setting $m_3=0$. The only difference is that for a double pendulum the imbalance $N$ can only be non-negative, whereas for the swing it can have any sign depending on the size and location of dumbbell masses, i.e. the positioning of the body. 

Our control variable is not the angle $\theta$, as in most simple models, but rather the torque $u$ in the lower joint. The Lagrangian does not take into account damping, which, as we shall see, complicates the problem considerably. We introduce the damping coefficients $\alpha$ and $\beta$ in the upper and the lower joint, respectively, and modify the Euler-Lagrange equations as follows:
\begin{align}\label{ELDamp}
\begin{cases}
\frac{d}{dt}\frac{\partial L}{\partial\dot{\phi}}-\frac{\partial L}{\partial\phi}+\alpha\dot{\phi}=0\\
\frac{d}{dt}\frac{\partial L}{\partial\dot{\theta}}-\frac{\partial L}{\partial\theta}+\beta\dot{\theta}=u.
\end{cases}
\end{align} 
Throughout most of the paper we assume that the body is perfectly balanced, i.e. $N=0$. This isolates the angular momentum transfer as the \noindent\textbf{\textit{energy pumping mechanism}} and simplifies the equations of motion. The unbalanced swing is briefly considered in Section ``Unbalanced swing". For $N=0$ \eqref{ELDamp} reduces to
\begin{align}\label{BalSw}
\begin{cases}
(I_1+I_2)\ddphi+I_2\ddth+\alpha\dphi+Mgl_1\sin\phi=0\\
I_2\ddphi+I_2\ddth=u-\beta\dth.
\end{cases}
\end{align}  
When $\alpha=\beta=0$ one easily recognizes in \eqref{BalSw} the equations of motion of the \noindent\textbf{\textit{reaction wheel pendulum}} introduced in \cite{SCL}, see also \cite{KCS,Grit,ZCM,AIb,Hak12,Griz}.
It will be convenient for us to transform \eqref{BalSw} into a simpler form. First, we eliminate $\ddth$ from the first equation in \eqref{BalSw} using the second:
$$
I_1\ddphi+\alpha\dphi+Mgl_1\sin\phi=-(u-\beta\dth).
$$
Next, by introducing new constants and variables:
\begin{equation}\label{Relab}
c:=\frac{\alpha}{I_1};\ \ \ \o^2:=\frac{Mgl_1}{I_1}=\frac{g}{l_1};\ \ \ \psi:=-\frac{I_2}{I_1}(\phi+\theta);\ \ \ v:=-\frac{u-\beta\dth}{I_1},
\end{equation}  
we reduce the equations of motion to:
\begin{align}\label{psiBalSw}
\begin{cases}
\ddphi+c\dphi+\o^2\sin\phi=v\\
\ddpsi=v.
\end{cases}
\end{align}

\subsection{Damping canceling control} 

The first equation describes a driven damped pendulum, and the first thought that comes to mind for inducing sustained oscillations of it is to simply cancel the damping in the upper joint by setting $v=c\dphi$, or equivalently $u=-\alpha\dphi+\beta\dth$. With it, $\phi$ will follow the dynamics of a simple undamped pendulum that oscillates at the natural frequency $\o=\sqrt{\frac{g}{l_1}}$. Unfortunately, if one is interested in swinging with a particular amplitude this will not work. Natural oscillations are neutral rather than stable, and any perturbation will shift the amplitude up or down. In other words, the cancelling control fails already our first design criterion: swing oscillations are not attracted to a stable limit cycle.
\begin{figure}[!ht]
\centering
(a) \includegraphics[width=3in]{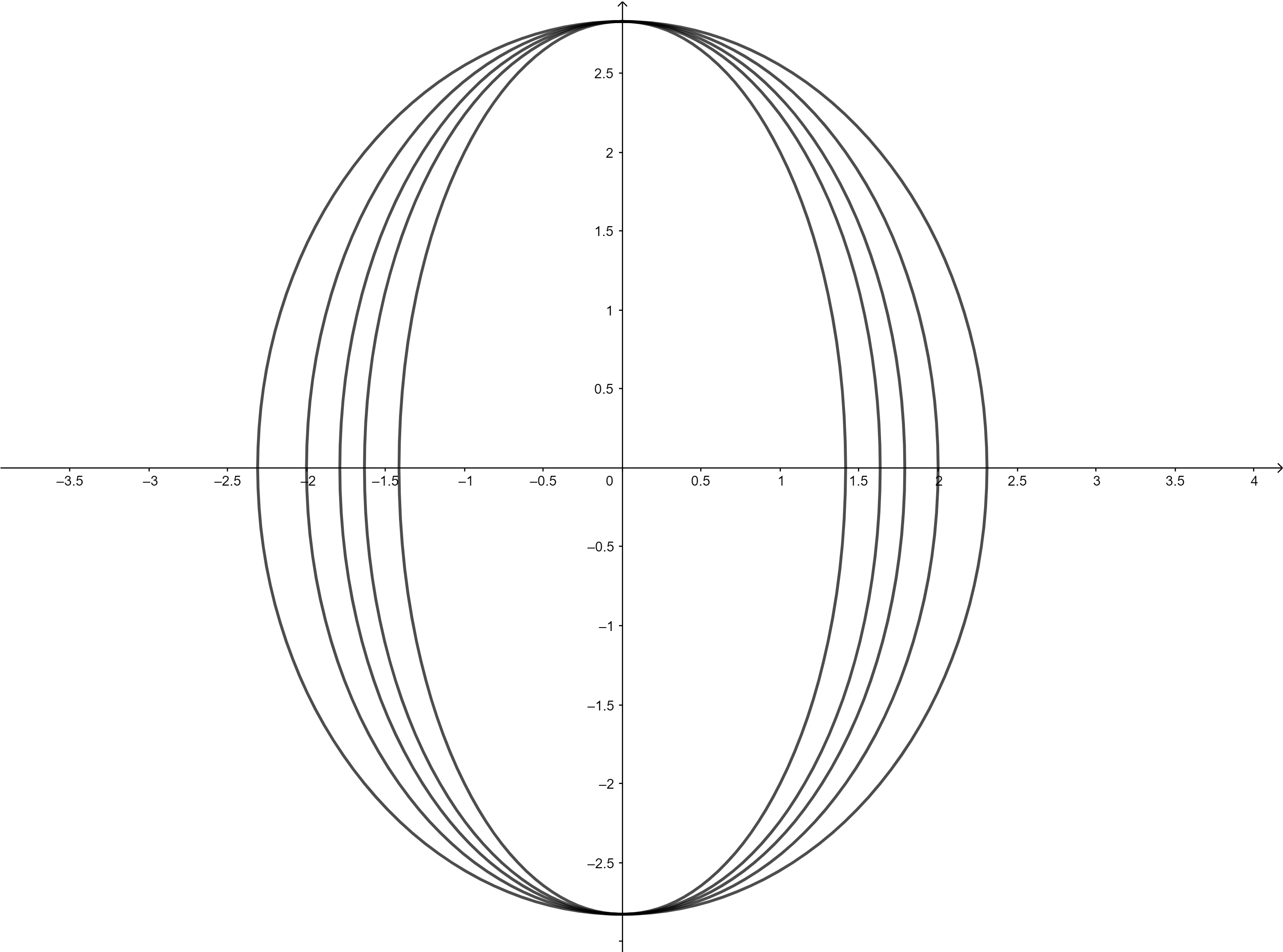}
\hspace{2em}
(b) \includegraphics[width=2.35in]{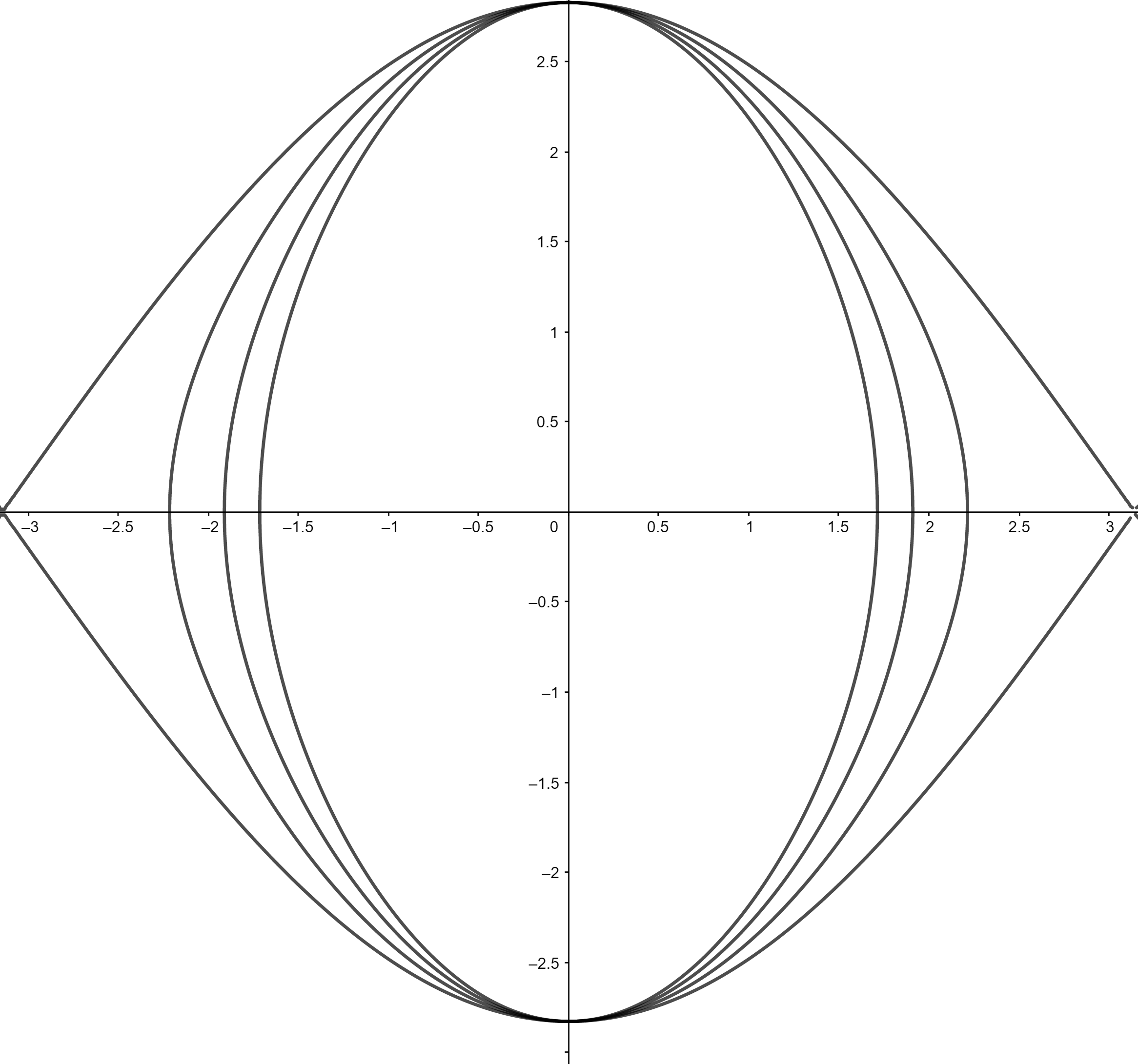}
\caption{\label{PhaseLimCyc} Phase portraits of limit cycles in the $\phi$-$\dphi$ plane for (a) $\frac12\dphi^2+\frac12\o^2\phi^2=r^2$ with $\o^2=1.5, 2, 2.5, 3, 4$; (b) $\frac12\dphi^2+\frac12\o^2(1-\cos\phi)=r^2$ with $\o^2=8,10,12,14$. In both plots $r^2=8$.} 
\end{figure} 
\medskip

\subsection{Trajectory tracking} 

This popular approach is pursued in \cite{CS,ACK,FSGA,Grit}, e.g. by prescribing $\phi(t)=A\cos\o t$ as the target trajectory, is also unsatisfactory. While the canceling control does not do enough, a tracking control does too much, prescribing not only the frequency and the amplitude but also the phase of oscillations. This is bound to create a lot of unnecessary effort trying to shift the phase when it is offset, something a swinger does not really care about.

\subsection{Naive limit cycle} 

The next idea is to use the standard construction of \noindent\textbf{\textit{imposing a  stable limit cycle}} on a single second-order equation \cite{Hak12,Iwa}. One specifies the desired limit cycle of the form $\pfi(\phi,\dphi)=r^2$, where $\pfi$ is an \noindent\textbf{\textit{``energy function"}} determining its shape, e.g. $\pfi=\frac12\dphi^2+\frac12\o^2\phi^2$ or $\pfi=\frac12\dphi^2+\frac12\o^2(1-\cos\phi)$, and $r$ determines its size, see Figure \ref{PhaseLimCyc}. Then $v$ is chosen so that the closed loop equation becomes
\begin{equation}\label{LimCyc}
\ddphi+\kappa(\pfi-r^2)\,\frac{\partial\pfi}{\partial\dphi}+\frac{\dphi}{\partial\pfi/\partial\dphi}\,\frac{\partial\pfi}{\partial\phi}=0,
\end{equation}  
with a positive gain $\kappa$. One can then prove, using the Lyapunov function $V=\frac12(\pfi-r^2)^2$ and La Salle's invariance principle, that $\pfi=r^2$ is its stable limit cycle \cite{AIb}. For the pendulum $\pfi=\frac12\dphi^2+\frac12\o^2(1-\cos\phi)$ is the most natural choice, and it leads to \begin{equation}\label{Stabphi}
\ddphi+\kappa(\pfi-r^2)\dphi+\o^2\sin\phi=0
\end{equation}  
with a very transparent interpretation. When $\pfi>r^2$ the damping term is positive and decreases the amplitude, and when $\pfi<r^2$ it turns negative and the energy is pumped into the swing instead. Asymptotically, $\pfi\approx r^2$ and we have oscillations at the natural frequency as with the naive control, but now they are stabilized against perturbations.
\begin{figure}[!ht]
\centering
(a) \includegraphics[width=2.7in]{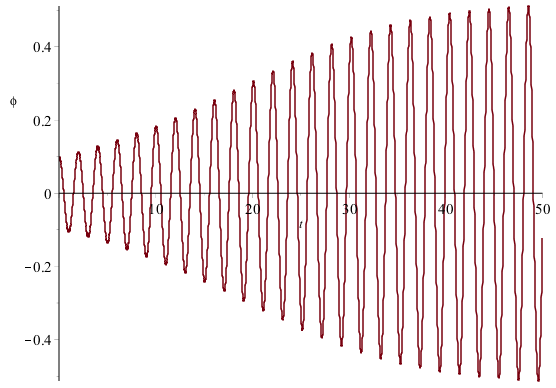} 
(b) \includegraphics[width=2.7in]{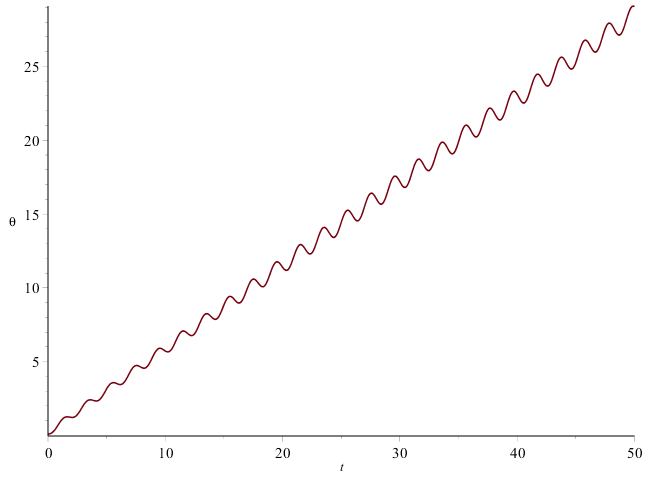}
\caption{\label{Naive0D} Graphs of (a) $\phi(t)$; (b) $\theta(t)$ for balanced swing with the limit cycle control  on $\phi$: $M=1$, $I_1=1$, $I_2=0.04$, $\alpha=0.1$, $\kappa=0.1$, $r=1.16$.
} 
\end{figure} 

While this is a fine solution for stabilizing oscillations of a damped pendulum, applied directly to \eqref{psiBalSw} it runs into a problem well-known from stabilizing equilibria in underactuated systems. When the controlled subsystem is close to the equilibrium point or manifold, the remaining degrees of freedom follow what is called \noindent\textbf{\textit{zero dynamics}} \cite[4.3]{Isid}. In this case, we simply ignored $\psi$, and hence $\theta$. But in zero dynamics the stabilizing control reduces to the damping cancellation, and the second equation from \eqref{BalSw} becomes
$\ddth=-\ddphi-\frac{\alpha}{I_2}\dphi$. Double integration then produces a secular term that grows linearly with time. Since $\phi$ undergoes something close to harmonic oscillations on the limit cycle we can expect $\theta$ to grow without a bound for generic initial values, and indeed this is the case, see Figure \ref{Naive0D}. Even when we select initial values to make the secular term vanish, random perturbations and/or parameter imprecision would knock the swing with this control off a bounded trajectory. Since $\theta$ is an angle unbounded growth means that the body will be making full rotations around the pivot point, so this solution is not physically realizable by humans. Even with a robot, this type of control would require a lot of wasted energy to implement. Ruling out full rotations motivated our second design criterion. 

\subsection{Velocity dependent feedback} 

Taking a closer look at human swingers, we see that they typically throw their body back and feet forward on the forward motion of the swing, and reverse this configuration on the backward motion \cite{WRR}, see also Figure \ref{HumSwing}. Spong proposed the same energy pumping strategy, in a feedback form, for swing up control of the Acrobot arm \cite{Spong}. The idea is that the sign of $\theta$ should be matched with the sign of $\dphi$, for example by setting $\theta=\gamma\tan^{-1}\dphi$ with some $\gamma>0$. The problem is that such a control leads to a third order equation for $\phi$, and, although it seems to take care of keeping $\theta$ bounded, it really does not. Once we convert the expression for $\theta$ into one for the corresponding torque we can no longer guarantee that the closed loop solution will be a trajectory where $\theta$ stays bounded. This is not a problem for the swing up task, where the goal is to swing the arm into the vertical upward position and stabilize it there by some other means, e.g. by LQR, but it is for periodic swinging. We need a controller that will itself work somewhat like an LQR, but for a limit cycle orbit rather than for an equilibrium point or a fixed trajectory.

\section{Feedback linearization and imposing a limit cycle}\label{ReaWh}
 
The previous section should suggest to the reader that designing a suitable torque control for the swing is not as simple as it seems, and may require more sophisticated techniques. A more successful approach comes from the observation that the balanced swing system \eqref{BalSw} has the same equations of motion as the reaction wheel pendulum with damping. And with the damping turned off this system is feedback linearizable \cite{SCL,AIb, Griz}. This means that it can be transformed into a chain of integrators by a change of coordinates and of the control input  \cite[4.2]{Isid}\cite[2]{Resp}. The problem of inducing a limit cycle with bounded zero dynamics in feedback linearizable systems has also received some attention in the recent literature \cite{AIb, Hak17}.

Feedback linearization transforms the system into one for new state variables $x_1,y_1$, and the original variables $\phi,\theta$ are expressed as explicit functions of them and their derivatives. This is where we get lucky the second time -- these functions turn out to be such that imposing linear oscillatory dynamics on $x_1,y_1$ has the effect of driving $\phi$ to a limit cycle with bounded zero dynamics for $\theta$. To introduce the new variables, note that the system \eqref{psiBalSw} with $c=0$ is just
\begin{align}\label{psiBalSwc0}
\begin{cases}
\ddphi+\o^2\sin\phi=v\\
\ddpsi=v.
\end{cases}
\end{align}
It is feedback linearized by the output $x_1$ below, and we define the corresponding state vector $x$ and the new control variable $w$  as follows:
\begin{align}\label{RWLin}
\begin{split}
x_1&:=\frac1{\o^2}(\psi-\phi)\\
x_2&:=\dot{x}_1=\frac1{\o^2}(\dpsi-\dphi)\\
x_3&:=\dot{x}_2=\sin\phi\\
x_4&:=\dot{x}_3=\dphi\cos\phi\\
w&:=\dot{x}_4=\ddphi\cos\phi-\dphi^2\sin\phi=-(\dphi^2+\o^2\cos\phi)\sin\phi+\cos\phi\,v.
\end{split}
\end{align}
\begin{remark}Note that one can solve for $v$ given $w$ only if $\cos\phi$ does not vanish, so this linearizing transformation is only well-defined in the range $|\phi|<\frac{\pi}2$.
\end{remark}
In terms of $x$ the system is transformed into the Brunovsky canonical form for $n=4$:
\begin{equation}\label{Brun}
\dot{x}=A_nx+B_nw,\hspace{0.5em} \text{ where } 
A_n:=\begin{pmatrix}
0  &  1  &  0   & \dots  &  0 \\
0  &  0  &  1   & \dots  &  0\\
\vdots  &  \vdots  & \vdots  &  \ddots  &  \vdots\\
0  &  0  &  0  &   \dots  &  1\\
0  &  0  &  0  &   \dots  &  0
\end{pmatrix}
\text{ and }
B_n:=\begin{pmatrix}
0\\
0\\
\vdots\\
0\\
1\end{pmatrix}.
\end{equation}
General approach to inducing stable limit cycles on a two-dimensional subspace of such a system, with a bounded oscillatory dynamics on its complement, is described in \cite{AIb,Hak17}. We will simplify the dynamic state feedback construction of \cite{AIb} to a static state feedback form. The idea is to impose a stable limit cycle on $y=Cx$ by applying the two-dimensional construction \eqref{LimCyc}, where $y=(y_1,y_2)^T$ is a vector and $C$ is a $2\times n$ matrix, while ensuring stable zero dynamics for the uncontrolled degrees of freedom. 
\begin{remark} To apply the two-dimensional construction we need to isolate an autonomous two-dimensional system for the dynamics of $y$. Assuming, without loss of generality, that it is also in the Brunovsky canonical form, $C$ has to be selected so that $CA_n$ and $CB_n$ reduce to $A_2$ and $B_2$. Playing with the system's equations leads to imposing the following identities: 
\begin{equation}\label{CId}
CA_n=A_2C+B_2\lambda^T, \ \ \ \ 
CB_n=B_2,
\end{equation} 
where $\lambda$ is some gain vector that provides flexibility for constraining the zero dynamics. The first identity implies that the vector must be of the form $\lambda^T=(0,0,\lambda_1,\dots,\lambda_{n-2})$, and the second then implies that \begin{equation}\label{Cmat}
C=\begin{pmatrix}
\lambda_1  &  \lambda_2  &  \dots & 1 &  0 \\
0  &  \lambda_1  &  \dots  &  \lambda_{n-2} & 1\\
\end{pmatrix}.
\end{equation}
Conversely, $C$ of the form \eqref{Cmat} satisfies identities \eqref{CId}.
\end{remark}
\begin{theorem}\label{nDLimCyc} For the chain of integrators \eqref{Brun} set $y:=Cx$ with $\lambda_i$ selected so that the polynomial $p(s)=s^{n-2}+\sum_{i=1}^{n-2}\lambda_is^{i-1}$ is Hurwitz, i.e. its roots have negative real parts. Then the control 
\begin{equation}\label{yLimCyc}
w=-\O^2y_1-\kappa(\pfi-r^2)\dot{y}_1
-\sum_{i=1}^{n-2}\lambda_ix_{i+2}\  \text{ with }\  \pfi=\frac12\dot{y}_1^2+\frac12\O^2y_1^2\  \text{ and }\  \kappa,r>0
\end{equation}
produces a closed loop system where $y$ has a stable limit cycle of harmonic oscillations with the frequency $\O$ and the amplitude $\frac{r\sqrt{2}}{\O}$, while the complementary state variables $x_1,\dots,x_{n-2}$ have bounded oscillatory zero dynamics.
\end{theorem}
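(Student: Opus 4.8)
The plan is to introduce the change of coordinates $(y_1,y_2,x_1,\dots,x_{n-2})$ and show that the control both renders the $y$-block autonomous, so the two-dimensional theory behind \eqref{LimCyc} applies verbatim, and leaves $x_1$ governed by a Hurwitz operator forced by $y_1$. First I would differentiate $y=Cx$ along \eqref{Brun} and substitute the identities \eqref{CId}: since $CA_n=A_2C+B_2\l^T$ and $CB_n=B_2$,
\begin{equation*}
\dot y = CA_n x + CB_n w = A_2 y + B_2(\l^T x + w).
\end{equation*}
Because $\l^T=(0,0,\l_1,\dots,\l_{n-2})$ we have $\l^T x=\sum_{i=1}^{n-2}\l_i x_{i+2}$, which is exactly the last term of the control \eqref{yLimCyc}; it cancels, so $\l^T x+w=-\O^2 y_1-\kappa(\pfi-r^2)\dot{y}_1$ depends on $y$ alone. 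With $A_2,B_2$ in Brunovsky form this is $\dot{y}_1=y_2$ and $\dot{y}_2=-\O^2 y_1-\kappa(\pfi-r^2)\dot{y}_1$, i.e.
\begin{equation*}
\ddot{y}_1+\kappa(\pfi-r^2)\dot{y}_1+\O^2 y_1=0,\qquad \pfi=\frac12\dot{y}_1^2+\frac12\O^2 y_1^2,
\end{equation*}
an autonomous copy of \eqref{LimCyc} for the quadratic energy function (with $y_1$ in place of $\phi$ and $\O$ in place of $\o$).

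Next I would invoke the two-dimensional result already established for \eqref{LimCyc}: using the Lyapunov function $V=\frac12(\pfi-r^2)^2$ and La Salle's invariance principle, $\pfi=r^2$ is a stable limit cycle. On it $\pfi\equiv r^2$, so the equation collapses to $\ddot{y}_1+\O^2 y_1=0$, giving harmonic oscillation at frequency $\O$; writing $y_1=A\cos(\O t+\delta)$ and imposing $\frac12\O^2A^2=r^2$ yields the amplitude $A=\frac{r\sqrt2}{\O}$. This settles the assertion about $y$.

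For the complementary dynamics I would exploit the chain-of-integrators structure: $\dot{x}_i=x_{i+1}$ forces $x_i=x_1^{(i-1)}$, so the first row of \eqref{Cmat}, namely $y_1=\sum_{i=1}^{n-2}\l_i x_i+x_{n-1}$, becomes the order-$(n-2)$ linear ODE
\begin{equation*}
x_1^{(n-2)}+\sum_{i=1}^{n-2}\l_i x_1^{(i-1)}=y_1,
\end{equation*}
whose characteristic polynomial is precisely $p(s)$. Since $p$ is Hurwitz the homogeneous part decays exponentially, so the transient dies out; as $y_1$ is bounded the forced response $x_1$ stays bounded, and once $y_1$ settles to its $\O$-harmonic on the cycle the steady-state $x_1$ is $\O$-harmonic too (the Hurwitz property rules out $p(i\O)=0$). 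Its derivatives $x_2,\dots,x_{n-2}$ are then bounded $\O$-oscillations as well, which is the claimed bounded oscillatory zero dynamics.

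I would flag the cascade decoupling as the crux. The real content is verifying that $(y_1,y_2,x_1,\dots,x_{n-2})$ is an invertible change of coordinates in which the control makes the $y$-block closed (so the limit-cycle theory transfers unchanged) while leaving $x_1$ driven by the Hurwitz operator $p(d/dt)$ through $y_1$; everything after that is bookkeeping. The Hurwitz hypothesis on $p$ is exactly what converts ``$y_1$ bounded and oscillatory'' into ``$x_1,\dots,x_{n-2}$ bounded and oscillatory,'' and is the one place the assumption on the $\l_i$ is essential.
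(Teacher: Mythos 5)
Your proof is correct and follows essentially the same route as the paper: the same use of the identities \eqref{CId} to get $\dot{y}=A_2y+B_2(w+\lambda^Tx)$, the same cancellation of $\lambda^Tx$ by the control so that the $y$-block becomes an autonomous copy of \eqref{LimCyc}, and the same Hurwitz argument for the complementary variables driven by $y_1$. The only difference is cosmetic: you write the zero dynamics as the scalar equation $x_1^{(n-2)}+\sum_{i=1}^{n-2}\lambda_i x_1^{(i-1)}=y_1$ with characteristic polynomial $p(s)$, whereas the paper writes the equivalent first-order form $\dot{z}=M_{\lambda}z+B_{n-2}y_1$ with $M_{\lambda}$ the companion matrix of $p(s)$.
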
 
\begin{proof} 
Since $y=Cx$ we have $\ds{y_1=x_{n-1}+\sum_{i=1}^{n-2}\lambda_ix_{i}}$. Applying equation \eqref{Brun} and identities \eqref{CId}, we obtain 
$$
\dot{y}=C\dot{x}=CA_nx+CB_nw=A_2Cx+B_2\lambda^Tx+B_2w=A_2y+B_2(w+\lambda^Tx).
$$
Therefore, $\ds{y_2=\dot{y}_1=x_{n}+\sum_{i=1}^{n-2}\lambda_ix_{i+1}}$, and
$$
\ddot{y}_1=\dot{y}_2=w+\sum_{i=1}^{n-2}\lambda_i\ddot{x}_i=w+\sum_{i=1}^{n-2}\lambda_ix_{i+2}.
$$
Applying the strategy used in \eqref{LimCyc} with $\pfi=\frac12\dot{y_1}^2+\frac12\O^2y_1^2$ we define the control by \eqref{yLimCyc} and
obtain a closed loop system with a stable limit cycle for $y_1,y_2$. Note that the values of $\lambda_i$ are left undetermined so far.

To complete the proof we have to account for the zero dynamics of the remaining degrees of freedom, this will also provide the constraints on the choice of $\lambda_i$. The initial variables $x_1\dots,x_{n-2}$ together with $y_1,y_2$ can be chosen as coordinates on the entire state space, and we define the complementary subspace state vector as $z:=(x_1,\dots,x_{n-2})^T$. Since $\dot{x}_{i}=\dot{x}_{i+1}$ with $\dot{x}_{n-2}=x_{n-1}=y_1-\sum_{i=1}^{n-2}\lambda_ix_{i}$ we can express $\dot{z}$ in the matrix form as 
\begin{equation}\label{zDynam}
\dot{z}=\begin{pmatrix}
x_2\\
x_3\\
\vdots\\
x_{n-2}\\
y_1-\sum_{i=1}^{n-2}\lambda_ix_{i}\end{pmatrix}
=M_{\lambda}z+B_{n-2}y_1, \text{ where } M_{\lambda}:=
\begin{pmatrix}
0  &  1  &  0   & \dots  &  0 \\
0  &  0  &  1   & \dots  &  0\\
\vdots  &  \vdots  & \vdots  &  \ddots  &  \vdots\\
0  &  0  &  0  &   \dots  &  1\\
-\lambda_1  &  -\lambda_2  &  -\lambda_3  &   \dots  &  -\lambda_{n-2}
\end{pmatrix}.
\end{equation}
The matrix $M_{\lambda}$ is the companion matrix of the polynomial $p(s)=s^{n-2}+\sum_{i=1}^{n-2}\lambda_is^{i-1}$, and its eigenvalues are its roots \cite[4.11]{Lanc}. If $p(s)$ is Hurwitz then the zero dynamics of $z$ is driven by a harmonic excitation $y_1$ with a Hurwitz matrix $M_{\lambda}$, i.e. it is bounded and oscillatory.
\end{proof}  
\begin{remark} One can see this result as generalizing the static state feedback construction of imposing a limit cycle on a single second order equation discussed in Section ``Playground swing: equations of motion and naive controls", and it is inspired by the dynamic state feedback construction of \cite{AIb}. Hurwitz polynomials are polynomials all of whose roots have negative real parts. They commonly appear in control theory because when the characteristic polynomial of the matrix of a linear system is Hurwitz the system is asymptotically stable. In the operational calculus they appear in the denominator of the transfer function, and their roots are its poles. Accordingly, a popular method of linear control design is known as \textbf{\textit{pole placement}}. There are well-known computational methods for testing whether a polynomial is Hurwitz, such as the Hermite-Biehler test and the Routh-Hurwitz criterion. 

To generate a Hurwitz polynomial $p(s)$ one can directly assign its roots, and recover the polynomial's coefficients by the Vieta's formulas. For the system in Theorem \ref{nDLimCyc} this amounts to pole placement for the transfer function of the zero dynamics. More sophisticated generation methods, based on backward application of the Hermite-Biehler test or the Routh-Hurwitz criterion, are discussed  in \cite{ShCh}. They are particularly useful when sampling polynomials to obtain a more desirable behavior by trial and error..
\end{remark}
In our case, $\lambda$ has only two gains $\lambda_1,\lambda_2$, so $y_1=x_3+\lambda_1 x_1+\lambda_2 x_2$, and $x_3=\sin\phi$. Hence, explicitly:
\begin{align}\label{RWy1w}
\begin{split}
y_1&:=\sin\phi+\lambda_1 x_1+\lambda_2\,\dot{x}_1\\
w&=-\O^2y_1-\kappa(\pfi-r^2)\dot{y}_1-\lambda_1\sin\phi-\lambda_2\,\dphi\cos\phi\\
v&=\frac1{\cos\phi}\left(w-(\dphi^2+\o^2\cos\phi)\sin\phi\right).
\end{split}
\end{align}
Naively, we would prefer to select $\lambda_i=0$, so that the limit cycle is imposed directly on $\phi$. This is what we tried with the canceling control in the previous section, and now it becomes clear why it did not work. The polynomial  $p(s)=s^2+\lambda_2s+\lambda_1$ is Hurwitz if and only if $\lambda_i>0$. To get bounded zero dynamics $\theta,\dphi$ and $\dth$ have to be mixed in into the variable upon which the limit cycle is imposed. The angle of the swing would still undergo bounded oscillations under this refined construction because the entire state vector does, although we do lose some control over their exact shape.

\section{Feedback linearization control: analysis and variations}\label{ReaWhVar}

In this section we take a closer look at the swing dynamics under the control law designed in the previous section, and discuss some variations on it. In terms of $x_1$, $y_1$, the closed loop system is the following:
\begin{align}\label{RWClosed}
\begin{cases}
\ddot{x}_1+\lambda_2\dot{x}_1+\lambda_1 x_1=y_1\\
\ddot{y}_1+\kappa(\pfi-r^2)\dot{y}_1+\O^2y_1=0.
\end{cases}
\end{align}  
The second equation imposes a stable limit cycle on $y_1$, and in the steady state $y_1$ oscillates harmonically with the frequency $\O$ and the amplitude $\frac{r\sqrt{2}}{\O}$. The first equation then describes a damped harmonic oscillator driven by $y_1$. The original angles are recovered by inverting the linearizing transformation \eqref{RWLin}:
\begin{align}\label{RWInv}
\begin{split}
\phi&=\sin^{-1}\left(\ddot{x}_1\right);\\
\theta&=-\left(1+\frac{I_1}{I_2}\right)\phi-\frac{Mgl_1}{I_2}x_1.
\end{split}
\end{align}
\begin{figure}[!ht]
\centering
\hspace{-4.0em}
\includegraphics[width=0.9in]{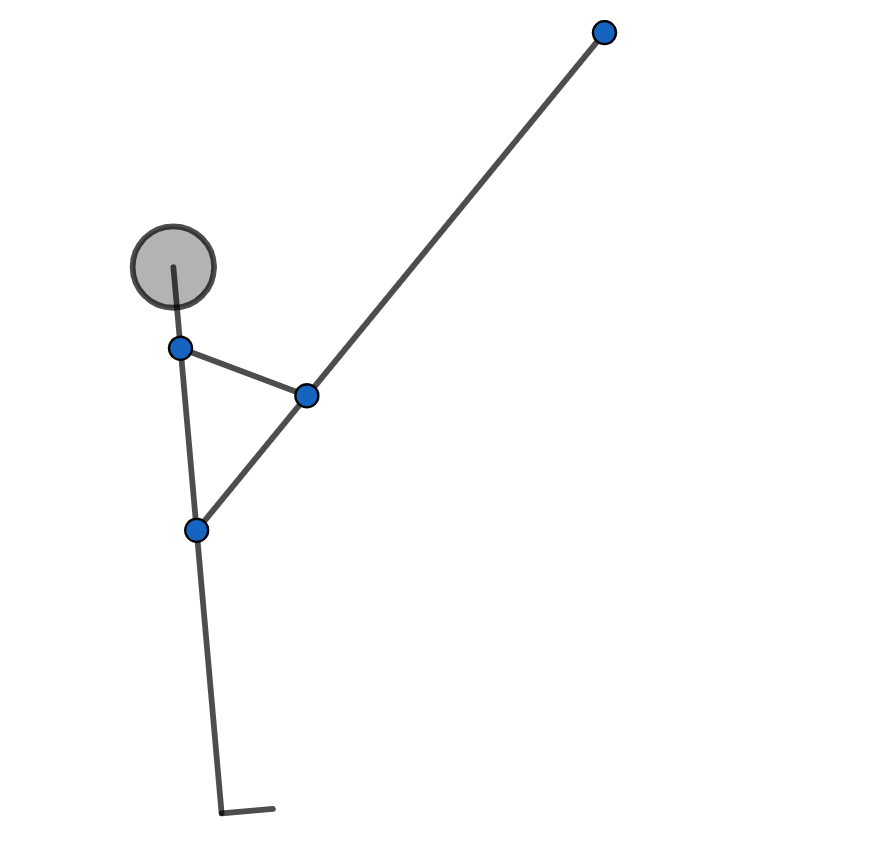}\hspace{10.5em} 
\includegraphics[width=0.85in]{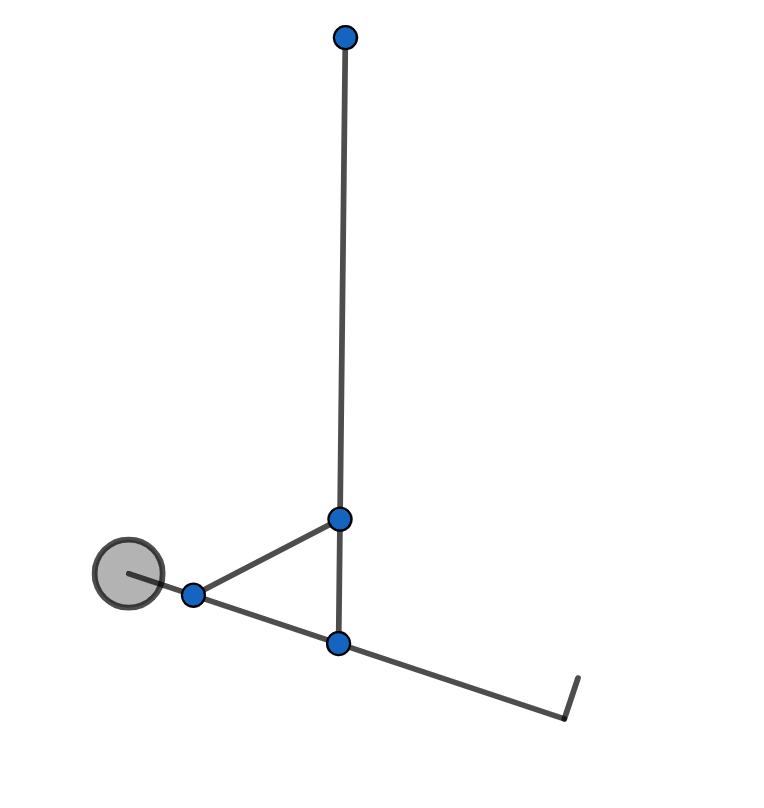}\hspace{4em}
\includegraphics[width=1.1in]{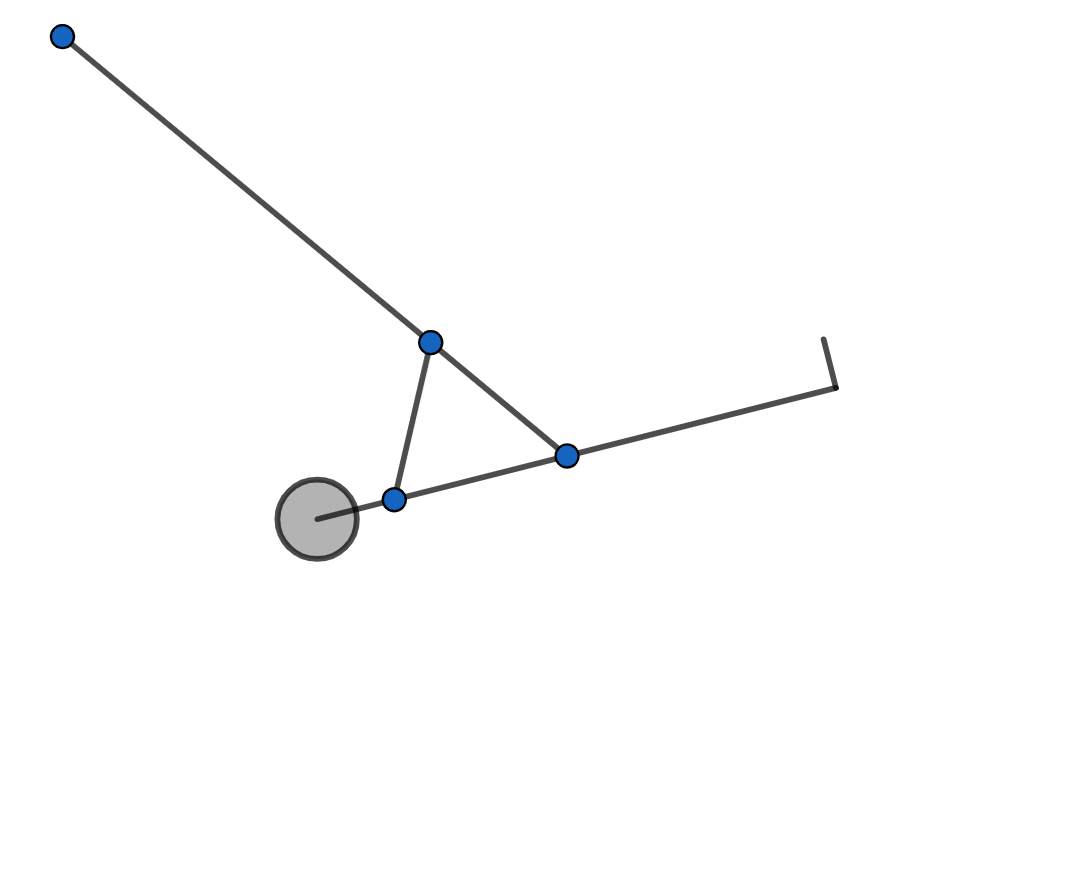}\\
\hspace{-6em}\includegraphics[width=1.24in]{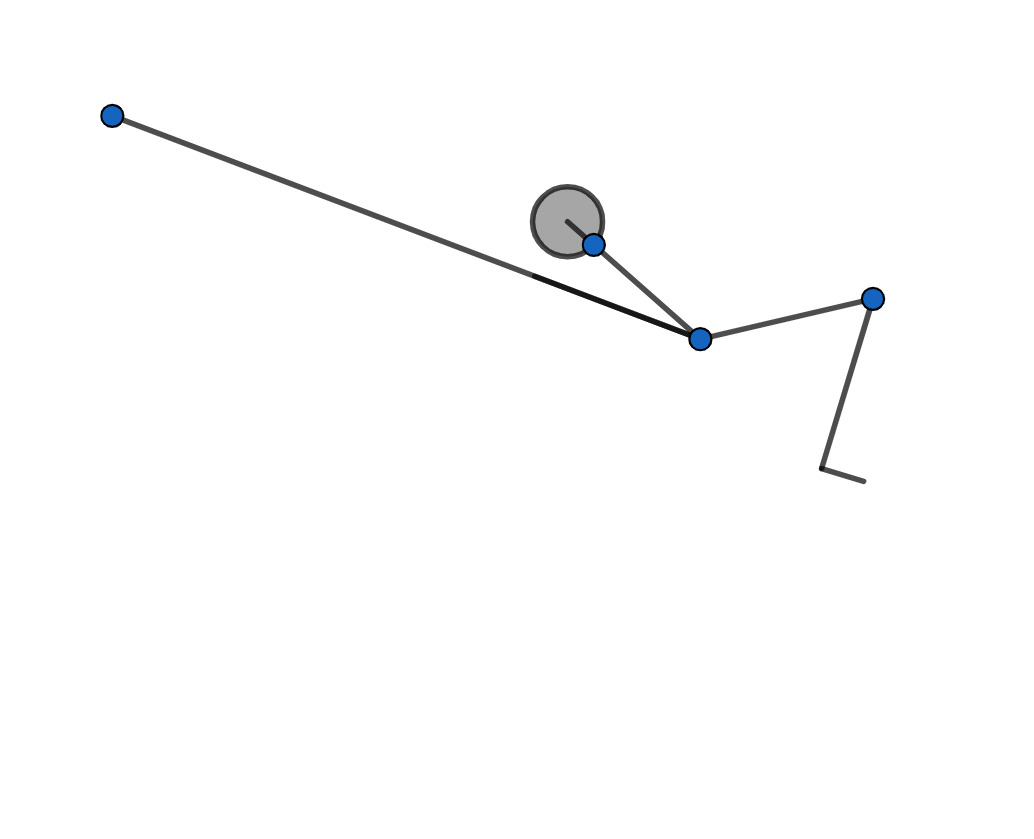}\hspace{5em}
\includegraphics[width=0.86in]{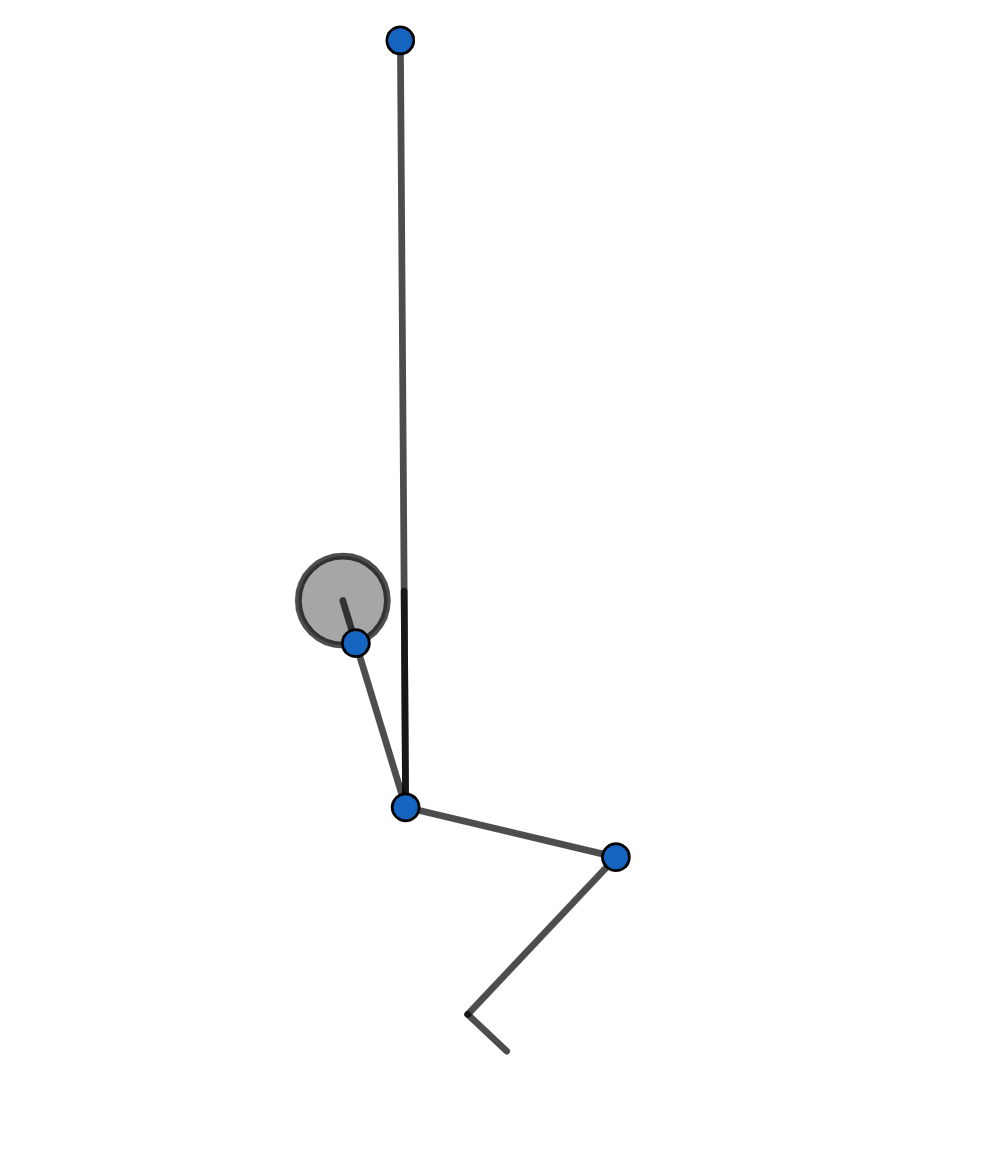}\hspace{0em}
\includegraphics[width=0.82in]{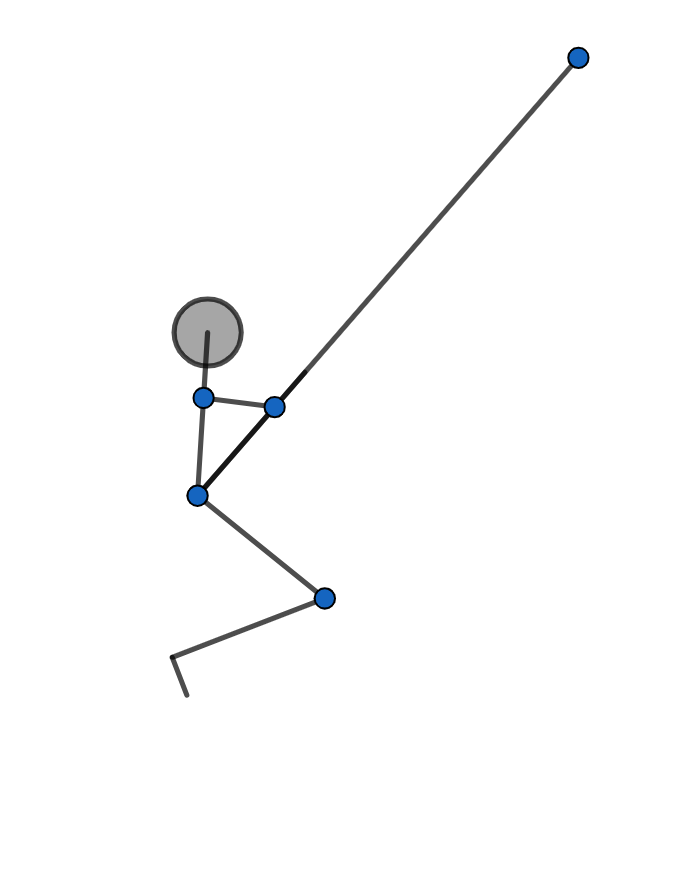}
\caption{\label{HumSwing}Sketches of a human swinger (left to right, top to bottom) made from playground observations.} 
\end{figure}
It is the special form of these relations, in addition to the properties of the feedback linearized system, that is responsible for the resulting control being suitable for our purposes. They imply that $\phi$ and $\theta$ undergo bounded anharmonic oscillations with a single dominant frequency $\O$ in the steady state. Using the standard results on driven harmonic oscillators \cite[9.10.3]{Dav}, in conjunction with \eqref{RWInv}, gives the following theorem.
\begin{theorem}\label{RWHosc} Let $\kappa,r, \lambda_i>0$ and $\pfi=\frac12\dot{y}_1^2+\frac12\O^2y_1^2$. Then the closed loop system for the undamped balanced swing \eqref{BalSw} with the control law \eqref{RWy1w} has bounded oscillatory zero dynamics, and the asymptotic motion of $\phi$ is anharmonic oscillation with the dominant frequency $\O$ and the amplitude\\ ${\ds \phi_{\max}=\sin^{-1}\left(\frac{r\O\sqrt{2}}{\sqrt{(\lambda_1-\O^2)^2+\lambda_2^2\O^2}}\right)}$.
\end{theorem}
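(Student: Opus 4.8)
The plan is to reduce the claim to two ingredients already in hand: the limit-cycle and zero-dynamics statement of Theorem \ref{nDLimCyc}, and the elementary frequency response of a stable linear oscillator. First I would specialize Theorem \ref{nDLimCyc} to $n=4$. With $\lambda_1,\lambda_2>0$ the polynomial $p(s)=s^2+\lambda_2 s+\lambda_1$ is Hurwitz, so the hypotheses are met, and the theorem immediately delivers two of the assertions: the complementary variable $z=(x_1,\dot{x}_1)$ has bounded oscillatory zero dynamics, and on the limit cycle $\pfi=r^2$ the output $y_1$ settles into harmonic oscillation at frequency $\O$ with amplitude $\tfrac{r\sqrt2}{\O}$. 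This last value is read off by inserting $y_1=a\cos(\O t+\delta)$ into $\tfrac12\dot{y}_1^2+\tfrac12\O^2y_1^2=r^2$, and the stability of the cycle is the La Salle argument cited after \eqref{LimCyc}.

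Second, I would analyze the first line of the closed-loop system \eqref{RWClosed}, namely $\ddot{x}_1+\lambda_2\dot{x}_1+\lambda_1 x_1=y_1$, as a damped linear oscillator forced by $y_1$. Since its homogeneous characteristic polynomial coincides with the Hurwitz $p(s)$, every free response decays, and the response to the harmonic drive $y_1(t)=\tfrac{r\sqrt2}{\O}\cos(\O t+\delta)$ is asymptotically the unique harmonic particular solution at frequency $\O$. The standard frequency-response formula \cite[9.10.3]{Dav} for natural frequency $\sqrt{\lambda_1}$ and damping coefficient $\lambda_2$ then gives the steady-state amplitude
\[
x_1^{\max}=\frac{r\sqrt2/\O}{\sqrt{(\lambda_1-\O^2)^2+\lambda_2^2\O^2}}.
\]
Third, I would recover $\phi$ from \eqref{RWInv} through $\phi=\sin^{-1}(\ddot{x}_1)$. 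Because $x_1$ is asymptotically harmonic at frequency $\O$, so is $\ddot{x}_1=-\O^2 x_1$, whose amplitude is therefore $\O^2 x_1^{\max}=\tfrac{r\O\sqrt2}{\sqrt{(\lambda_1-\O^2)^2+\lambda_2^2\O^2}}$; taking $\sin^{-1}$ of the extreme values yields exactly the stated $\phi_{\max}$. Since $\sin^{-1}$ is nonlinear, applying it to a single tone produces odd overtones, so $\phi$ is anharmonic yet retains $\O$ as its fundamental (dominant) frequency, which is the remaining assertion.

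The main obstacle is not the amplitude bookkeeping but the passage from ``on the limit cycle'' to ``asymptotically'': the drive $y_1$ is only asymptotically harmonic, so I must argue that $x_1$ converges to the steady-state forced response as $t\to\infty$, rather than merely solving the idealized harmonically forced equation. The clean route is to write $y_1=y_1^\infty+e(t)$ with $y_1^\infty$ the harmonic limit and $e(t)\to0$, then invoke the Hurwitz decay of $p(s)$ (decaying-input stability of the linear oscillator) to conclude $x_1(t)-x_1^\infty(t)\to0$; differentiating twice and composing with the continuous $\sin^{-1}$ preserves the limit. This is the only step where genuine care about transients, rather than a formal steady-state computation, is required.
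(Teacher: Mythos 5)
Your proposal is correct and follows essentially the same route as the paper: the paper obtains the closed-loop decomposition \eqref{RWClosed} from Theorem \ref{nDLimCyc} (with $p(s)=s^2+\lambda_2 s+\lambda_1$ Hurwitz precisely because $\lambda_i>0$), reads off the harmonic limit cycle for $y_1$ of amplitude $\frac{r\sqrt{2}}{\O}$, treats the $x_1$ equation as a damped harmonic oscillator driven by $y_1$ via the cited frequency-response results, and recovers $\phi_{\max}$ through the inversion relation $\phi=\sin^{-1}(\ddot{x}_1)$ of \eqref{RWInv}, exactly as you do. Your closing paragraph on upgrading the idealized steady-state computation to a genuine asymptotic statement (decaying-input stability of the Hurwitz oscillator, with $\ddot{x}_1$ handled through the equation itself) is a detail the paper leaves implicit, and it correctly fills that gap.
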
 
Numerical simulations of the undamped balanced swing \eqref{BalSw} with the control law induced by \eqref{RWy1w} are shown on Figure \ref{RWPendGraphs}\,(a). The physical parameters are selected to match a typical human body and a playground swing, the standard SI units are used throughout the paper. The transient time and the shape of oscillations are quite satisfactory, the amplitudes can be controlled by changing $r$. The anharmonicity is only visible in the shape of the $\theta$ graph. One can also see that the body's and the swing's oscillations are exactly anti-phase in the steady state. This is also a direct consequence of the recovery relations \eqref{RWInv} for this type of control, because $\ddot{x}_1$ and hence $\phi$  oscillates in-phase with $x_1$, while $\theta$ is their linear combination with negative coefficients.
\begin{figure}[!ht]
\centering
(a) \includegraphics[width=2.9in]{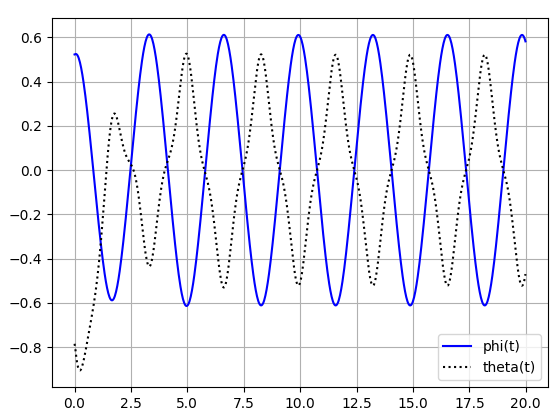} 
(b) \includegraphics[width=2.9in]{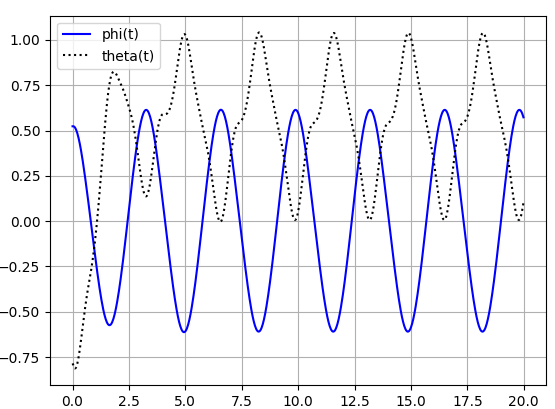}
\caption{\label{RWPendGraphs} Graphs of $\phi(t)$ (solid) and $\theta(t)$ (dotted) for balanced swing with the reaction wheel pendulum control for $\alpha=0$ without shifting the middle angle for $\theta$ in (a), and with shifting it in (b). Other parameters are (masses in kilograms, lengths in meters) $m_1=1$, $m_2=m_3=35$, $l_1=2.5$, $l_2=l_3=0.9$, $\o=1.98$, $\O=1.9$, $\kappa=100$, $r=0.7$.
} 
\end{figure} 
\begin{remark}The type of swinging prescribed by our control does not really resemble human swinging as observed on playgrounds. It is too symmetric, the dumbbell leans one way as much as the other way to transfer as much momentum as possible. Humans are constrained in how much they can lean backwards without falling over, see Figure \ref{HumSwing}, and some swings have benches with back rests that prevent negative values of $\theta$ altogether. 
\end{remark}
These constraints may or may not be relevant for an electromechanical swinging device or a robot, but, in any case, they can be taken into account even by the type of control we proposed, with a small modification. It turns out that all we have to do is shift the center of the limit cycle for $y_1$. Changing the energy function $\pfi$ from Theorem \ref{RWHosc} to $\pfi_\s:=\frac12\dot{y}_1^2+\frac12\O^2(y_1-\s)^2$ has the effect of shifting the center of oscillations for $y_1$ from $0$ to $\s$, and for $x_1$ from $0$ to $\frac{\s}{\lambda_1}$. This does not affect $\phi$ because it is related to $x_1$ through the second derivative, but for $\theta$ the center is shifted to $-\frac{Mgl_1}{I_2}\frac{\s}{\lambda_1}$. One can adjust $r$ and $\s$ so that $\theta$ oscillates between $0$ (or some small negative minimum) and a positive maximum. Numerical simulations of the system with the shift that makes $\theta=0$ the minimal value (no leaning backwards) are shown on Figure \ref{RWPendGraphs}\,(b). 
\begin{figure}[!ht]
\centering
(a)\includegraphics[width=2.94in]{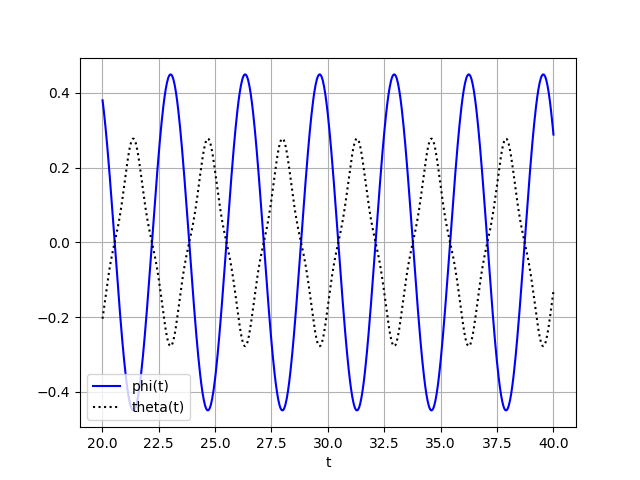}
(b)\includegraphics[width=2.94in]{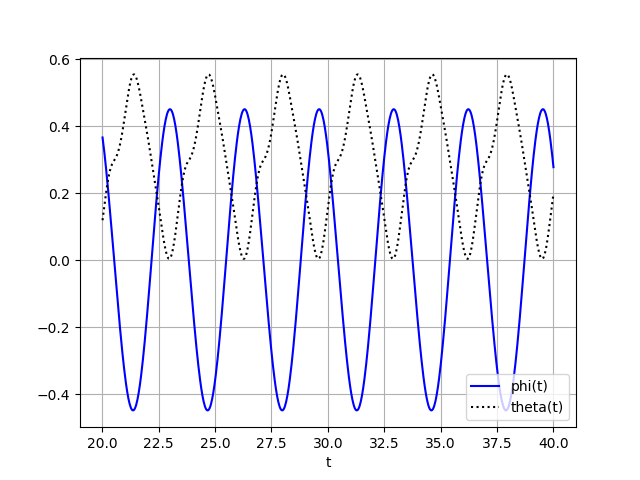}
\caption{\label{RWPendGraphsLow} Graphs of $\phi(t)$ (solid) and $\theta(t)$ (dotted) for balanced swing with the reaction wheel pendulum control for $\alpha=0$ without shifting the middle angle for $\theta$ in (a), and with shifting it in (b). The value of $r$ is reduced to $0.53$ to keep the maximal value of $\theta$ in (b) as in Figure \ref{RWPendGraphs}\,(a). Other parameters are the same as in Figure \ref{RWPendGraphs}.
} 
\end{figure}
\begin{remark}As one can see from Figure \ref{RWPendGraphs}, the shift of the middle angle also changes the shape of body's oscillations, and forces more forward leaning to maintain the same swing amplitude. This may not be comfortable for a human swinger. To remedy this, the amplitude of $\theta$, which is monotonically related to the amplitudes of $\phi$ and $x_1$ by \eqref{RWInv}, should also be reduced by making $r$ smaller. The tradeoff is that the swing amplitude will become smaller as well, see Figure  \ref{RWPendGraphsLow}.
\end{remark}
\begin{figure}[!ht]
\centering
(a) \includegraphics[width=2.85in]{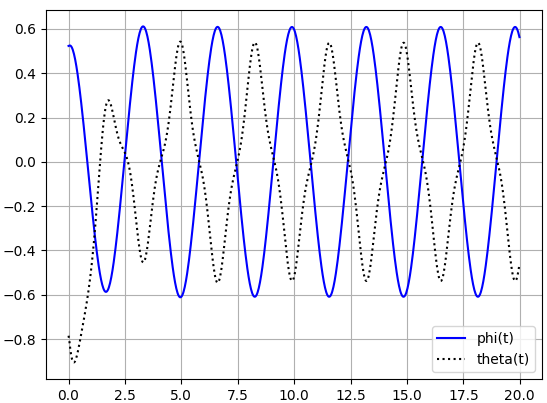} 
(b) \includegraphics[width=2.82in]{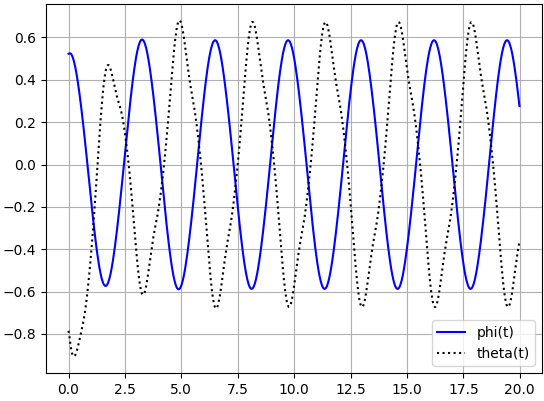}
\caption{\label{RWDPendGraphs} Graphs of $\phi(t)$ (solid) and $\theta(t)$ (dotted) for balanced swing with the reaction wheel pendulum control for (a) $\alpha=3$ and (b) $\alpha=30$. 
Other parameters are $m_1=1$, $m_2=m_3=35$, $l_1=2.5$, $l_2=l_3=0.9$, $\o=1.98$, $\O=1.9$, $\kappa=100$, $r=0.7$.} 
\end{figure} 
Even though the feedback linearization control was designed for the case of zero damping it works quite well when the damping is turned on as well, with small corrections. The new expression for $\ddot{x}_1$ becomes  $\ddot{x}_1=\sin\phi+\frac{c}{\o^2}\dphi$ and we have to modify the control $v$ in \eqref{RWy1w} accordingly: 
\begin{equation}
v=\frac1{\cos\phi+\frac{c}{\o^2}\dphi}\left(w-(\dphi^2+\o^2\cos\phi)\sin\phi\right).
\end{equation}
With these corrections, the closed loop system becomes 
\begin{align}\label{RWDClosed}
\begin{cases}
\ddot{x}_1+\lambda_2\dot{x}_1+\lambda_1 x_1=y_1+\frac{c}{\o^2}\dphi\\
\ddot{y}_1+\kappa(\pfi-r^2)\dot{y}_1+\O^2y_1=-c\left(\lambda_2\sin\phi+\dphi\cos\phi-\frac{\lambda_1-\lambda_2c}{\o^2}\,\dphi\right).
\end{cases}
\end{align} 
Recall that $c=\frac{\alpha}{I_1}$ and the value of $I_1$ is quite large for human swingers, so $c$ functions as a small parameter, partially explaining why the control still works. However, system \eqref{RWDClosed} is analytically opaque, especially considering that there is no simple expression for $\phi$ in terms of $x_1$ and $y_1$ anymore. 

Numerical simulations of the balanced swing \eqref{BalSw} with relatively small damping in the upper joint are shown on Figure \ref{RWDPendGraphs} (compare to sketches on Figure \ref{HumSwing}). One can see that the amplitude of the body needed to achieve the same amplitude of the swing increases with the increase of damping, to compensate for the additional energy loss, and the oscillations of the swing and body angles shift away from being exactly anti-phase at zero damping, which does match playground observations better. An animation of the corresponding swing motion, with a shift that allows some backward leaning, is shown on Figure \ref{RWDPendAnim}. As one would expect, the main difference from human swinging is that the rotation of the body is spread out over the entire swing cycle rather than confined to swift motions near the turning points.

\begin{figure}[!ht]
\centering
\includegraphics[width=1.9in]{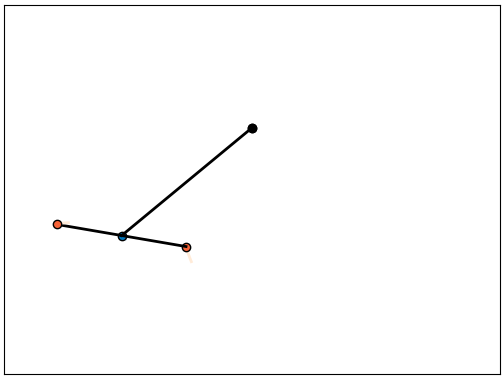} 
\includegraphics[width=1.9in]{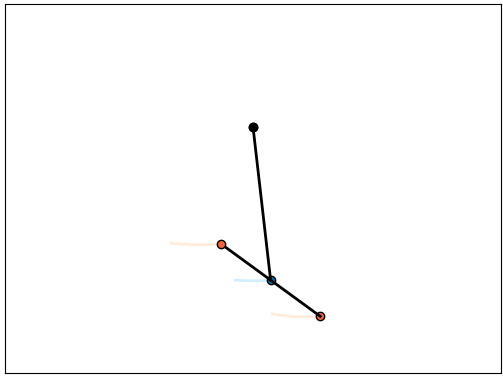}
\includegraphics[width=1.9in]{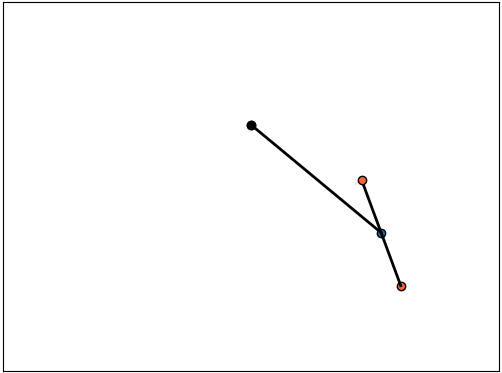}\\
\includegraphics[width=1.9in]{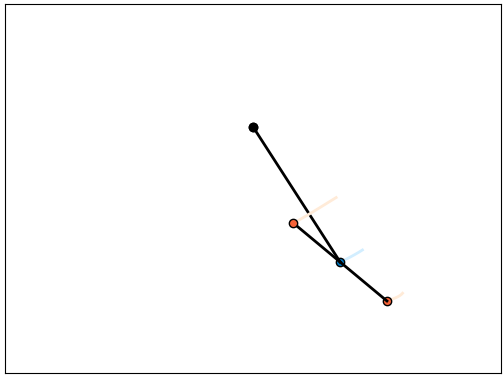} 
\includegraphics[width=1.9in]{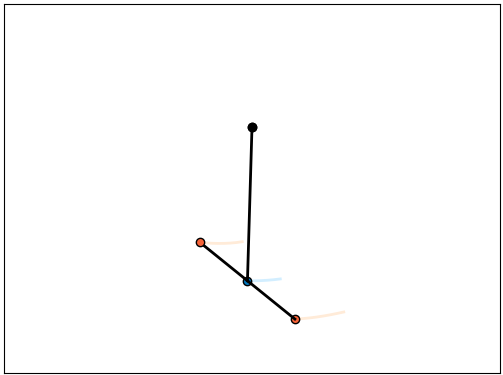}
\includegraphics[width=1.9in]{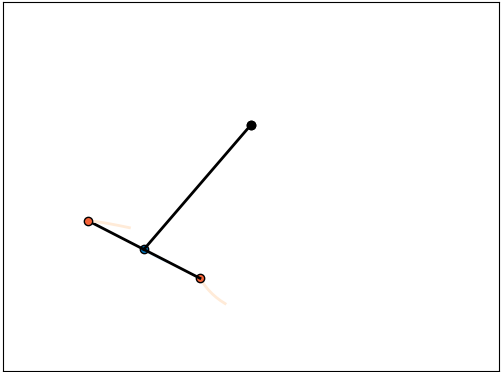}
\caption{\label{RWDPendAnim}Comic strip of a Python animation (left to right, top to bottom) of the balanced swing with $\alpha=3$ controlled by the feedback linearization control with shifted middle angle.} 
\end{figure}

\section{Damping corrections for small amplitudes}\label{Lin}

While the damped ($c\neq0$) reaction wheel pendulum system \eqref{psiBalSw} is not feedback linearizable we can get some insight into its behavior by considering the case of small amplitudes, when $\sin\phi\approx\phi$. The linearized system is analytically tractable, and we can better understand what happens in the closed loop system when a similar control is applied, unlike for the perturbed system \eqref{RWDClosed}. As a bonus, it will suggest an alternative control law for the original non-linear system that is applicable when the damping is large. 

The linearization of system \eqref{psiBalSw} for small $\phi$ is
\begin{align}\label{LinBalSw}
\begin{cases}
\ddphi+c\dphi+\o^2\phi=v\\
\ddpsi=v,
\end{cases}
\end{align} 
and can be written in the matrix form by taking $\xi_1:=\psi$, $\xi_2:=\dpsi$, $\xi_3:=\phi$, and $\xi_4:=\dphi$:
\begin{equation*}\label{LinBalSwMat}
\dot{\xi}=A\xi+Bv,\hspace{0.5em} \text{ where } 
A=\begin{pmatrix}
0  &  1  &  0  &  0\\
0  &  0  &  0  &  0\\
0  &  0  &  0  &  1\\
0  &  0  &  -\o^2  & -c
\end{pmatrix}
\text{ and }
B=\begin{pmatrix}
0\\
1\\
0\\
1\end{pmatrix}.
\end{equation*} 
An output that converts it into the Brunovsky canonical form \eqref{Brun} can be chosen as $h\cdot\xi$ \cite[4.1]{Isid}, \cite{Resp}, where $h$ is a vector orthogonal to $B$, $AB$, $A^2B$. When we choose $x_1$ as below the conversion runs parallel to \eqref{RWLin}, but with explicit damping corrections:
\begin{align}\label{LinVar}
\begin{split}
x_1&:=\frac1{\o^2}(\psi-\phi)-\frac{c}{\o^4}(\dpsi-\dphi)+\frac{c^2}{\o^4}\phi\\
x_2&:=\dot{x}_1=\frac1{\o^2}(\dpsi-\dphi)+\frac{c}{\o^2}\dphi\\
x_3&:=\dot{x}_2=\phi\\
x_4&:=\dot{x}_3=\dphi\,.
\end{split}
\end{align}
Indeed, for $c=0$ the expressions for $x_1$ and $x_2$ are exactly as before, and $x_3$ is $\phi$ instead of $\sin\phi$, which are close for small amplitudes. Effectively, the new $x_1$ is the approximate second anti-derivative of $\sin\phi$ for $c\neq0$, in explicit feedback form. The auxiliary control $w:=-\o^2\phi-c\dphi+v$ that transforms 
\eqref{LinBalSw} into the Brunovsky form \eqref{Brun}, and the auxiliary variable $y_1:=\lambda_1x_1+\lambda_2x_2+x_3$, upon which the limit cycle is imposed according to Theorem \ref{nDLimCyc}, also have similar form. The resulting control law is
\begin{equation}\label{LinCont}
v=w+\o^2\phi+c\dphi
=(\o^2-\lambda_1)\phi+(c-\lambda_2)\dphi
-\O^2y_1
-\kappa(\pfi-r^2)\dot{y}_1,
\end{equation} 
and the closed loop system in terms of $x_1$, $y_1$ takes the exact same form as \eqref{RWClosed}:
\begin{align}\label{RWClosedLin}
\begin{cases}
\ddot{x}_1+\lambda_2\dot{x}_1+\lambda_1 x_1=y_1\\
\ddot{y}_1+\kappa(\pfi-r^2)\dot{y}_1+\O^2y_1=0.
\end{cases}
\end{align} 

However, because we now also have $\ddot{x}_1=\phi$, and the equations are linear, we can rewrite \eqref{RWClosedLin} in terms of $\phi$ directly. Differentiating the first equation twice eliminates $x_1$, and solving for $\ddot{y}_1$ in the second equation we derive an alternative form of the closed loop system:
\begin{align}\label{LinClos}
\begin{cases}
\ddphi+\lambda_2\dphi+\lambda_1\phi=-\O^2y_1-\kappa(\pfi-r^2)\dot{y}_1\\
\ddot{y}_1+\kappa(\pfi-r^2)\dot{y}_1+\O^2y_1=0.
\end{cases}
\end{align}
The second equation again produces a stable limit cycle consisting of harmonic oscillations with the frequency $\O$ and the amplitude $\frac{r\sqrt{2}}{\O}$. The first equation is a damped harmonic oscillator driven, in the steady state limit, by a harmonic excitation $-\O^2y_1$. 
\begin{remark} 
The closed loop system is particularly transparent when choosing the natural gains $\lambda_1=\o^2$ and $\lambda_2=c$, as the expression for $v$ suggests. Then $y_1=\psi$ and  $v=-\O^2\psi-\kappa(\pfi-r^2)\dpsi$. In other words, we are imposing a prescribed limit cycle on the sum of the swing and body angles $\phi+\theta$ (recall that $\psi=-\frac{I_2}{I_1}(\phi+\theta)$) rather than on the swing angle $\phi$ itself. This is somewhat counterintuitive, because it is $\phi$ that we really want to control, but, in hindsight, understandable. The reason the naive law that controlled $\phi$ directly produced unbounded zero dynamics was that the equation for $\psi$ in \eqref{psiBalSw} or \eqref{LinBalSw} has no damping or stiffness of its own. This is exactly what our controls introduce artificially, at the expense of controlling $\phi$ only indirectly. Some of the flexibility in controlling $\phi$ has to be sacrificed for the sake of suppressing unbounded growth of $\psi$, and hence of $\theta$. 
\end{remark}

From the standard results on driven harmonic oscillators \cite[9.10.3]{Dav} we can derive the following. 
\begin{theorem}\label{DDHosc} Let $\kappa,r, \lambda_i>0$ and $\pfi=\frac12\dot{y}_1^2+\frac12\O^2y_1^2$. Then the closed loop system for the linearized balanced swing \eqref{LinBalSw} with the control law \eqref{LinCont} has bounded oscillatory zero dynamics, and the asymptotic motion of $\phi$ is harmonic oscillation with the frequency $\O$ and the amplitude ${\ds \phi_{\max}=\frac{r\O\sqrt{2}}{\sqrt{(\lambda_1-\O^2)^2+\lambda_2^2\O^2}}}$.
\end{theorem}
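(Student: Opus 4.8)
The plan is to exploit the triangular coupling in the alternative form \eqref{LinClos}: the $y_1$-equation is autonomous, and the $\phi$-equation is a linear filter driven by $y_1$, so I can resolve the two subsystems in turn. First I would treat the second equation of \eqref{LinClos} by itself. With $\pfi=\frac12\dot{y}_1^2+\frac12\O^2y_1^2$ this is precisely the single second-order limit-cycle equation \eqref{LimCyc} analyzed earlier, so the Lyapunov function $V=\frac12(\pfi-r^2)^2$ together with La Salle's invariance principle shows that $\pfi=r^2$ is a stable limit cycle. On that cycle the term $\kappa(\pfi-r^2)\dot{y}_1$ vanishes identically, leaving $\ddot{y}_1+\O^2y_1=0$; hence in the steady state $y_1(t)=\frac{r\sqrt2}{\O}\cos(\O t+\delta)$ for some phase $\delta$, a harmonic oscillation of frequency $\O$ and amplitude $\frac{r\sqrt2}{\O}$.

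For the bounded oscillatory zero dynamics I would invoke Theorem \ref{nDLimCyc} directly. After the transformation \eqref{LinVar} the linearized system \eqref{LinBalSw} is exactly the chain of integrators \eqref{Brun} with $n=4$, and the control \eqref{LinCont} is exactly the control \eqref{yLimCyc}, so the complementary variables $z=(x_1,x_2)^T$ obey \eqref{zDynam} with companion matrix $M_\lambda$ of $p(s)=s^2+\lambda_2 s+\lambda_1$. Since $\lambda_1,\lambda_2>0$ this polynomial is Hurwitz, whence $z$ is the bounded oscillatory response of an asymptotically stable system to the bounded harmonic excitation $y_1$.

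The amplitude of $\phi$ then comes from the first equation of \eqref{LinClos}, which is a damped harmonic oscillator with positive damping $\lambda_2$ and positive stiffness $\lambda_1$, hence Hurwitz characteristic polynomial $s^2+\lambda_2 s+\lambda_1$, driven in the steady state by $-\O^2y_1=-r\O\sqrt2\cos(\O t+\delta)$. Applying the standard driven-oscillator result \cite[9.10.3]{Dav}, the homogeneous transient decays and the steady-state response is harmonic of frequency $\O$ with amplitude $\frac{r\O\sqrt2}{\sqrt{(\lambda_1-\O^2)^2+\lambda_2^2\O^2}}$, exactly as claimed. This runs in parallel to Theorem \ref{RWHosc}, the only difference being that in the linearized model $\ddot{x}_1=\phi$ holds exactly, so the arcsine wrapping present in the nonlinear amplitude formula disappears.

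The hard part will not be any single computation but the passage from the limit-cycle behavior of $y_1$ to the steady-state behavior of $\phi$: the forcing $-\O^2y_1-\kappa(\pfi-r^2)\dot{y}_1$ equals $-\O^2y_1$ only \emph{on} the limit cycle, not during the approach to it. To make the word ``asymptotic'' rigorous I would argue that, because the $\phi$-subsystem is a BIBO-stable linear time-invariant filter and its input converges to the harmonic signal above (the extra term $\kappa(\pfi-r^2)\dot{y}_1\to0$ as $\pfi\to r^2$), the output $\phi$ converges to the steady-state response to the limiting harmonic input; both the decaying transient of the filter and the decaying error in the input vanish asymptotically. This convergence-of-response step is the only place where more than the textbook driven-oscillator formulas is needed.
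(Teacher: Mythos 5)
Your proposal is correct and follows essentially the same route as the paper: the paper's (implicit) proof likewise rests on the alternative closed-loop form \eqref{LinClos}, the stable limit cycle of the autonomous $y_1$-equation, Theorem \ref{nDLimCyc} (via the Brunovsky-form reduction \eqref{LinVar}) for the bounded oscillatory zero dynamics, and the standard driven damped oscillator formula from \cite[9.10.3]{Dav} for the amplitude. Your final paragraph, justifying the passage from convergence of the input $-\O^2y_1-\kappa(\pfi-r^2)\dot{y}_1$ to the asymptotic harmonic response of the exponentially stable $\phi$-filter, supplies a detail the paper leaves tacit, but it is a refinement of the same argument rather than a different approach.
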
 
This explains why the reaction wheel control law still works (for small amplitudes, at least) with non-zero damping. But if we try to apply linear corrections for $x_1$ from \eqref{LinVar} to the original non-linear system \eqref{psiBalSw} we do not get $\ddot{x}_1$ to come out cleanly as $\sin\phi$, but rather get $\ddot{x}_1=\sin\phi+\frac{c}{\o^2}\dphi(1-\cos\phi)$. The exact second anti-derivative does not exist in a static state feedback form because the system is not feedback linearizable, so there is no exact non-linear analog of $x_1$.

However, things are better for $y_1$. The control law \eqref{LinCont} simplifies already when we take $\lambda_1:\lambda_2=\o^2:c$, i.e. when the control gains are in the same ratio as the natural gains. Moreover, this choice cancels the derivatives in the definition of $y_1$ and reduces it to a linear combination of $\phi$ and $\psi$ only:
$$
y_1=\frac{\l_1}{\o^2}(\psi-\phi)+\phi
=\frac{\l_2}{c}(\psi-\phi)+\phi.
$$
The advantage is that we can define $y_1$ this way even for the non-linear system, and, although we can no longer write the closed loop system in terms of $x_1$ and $y_1$, it turns out that the system \eqref{LinClos} for $\phi$ and $y_1$ does have a non-linear analog, and a well-studied one. This is the approach we pursue in the next section. However, there is a downside. Fixing the gain ratio leaves a single gain parameter to vary and reduces control options. As we will see, such single gain control only works well when the damping in the upper joint of the uncontrolled swing is already large.

\section{From balanced swing to driven damped pendulum}\label{HDDPend}

As we saw for the linearized balanced swing, for a particular choice of the gain ratio the variable upon which the stable limit cycle is imposed reduces to a linear combination of the swing and the body angles, without time derivatives, and the alternative closed loop system \eqref{LinClos} simplifies. We will show that a similar simplification occurs in the original non-linear system, even though, with the damping turned on, it is no longer feedback linearizable. One can see it as implementing the idea that the limit cycle should be imposed on a combination of angles rather than on the swing angle itself to ensure bounded zero dynamics. 

It is convenient to rescale the auxiliary variable $y_1$ we used before and define 
$$
\chi:=-\frac{I_1}{I_2}\psi+\mu\,\phi=\theta+(1+\mu)\phi\,,$$
where $\mu$ is a constant. Then from \eqref{psiBalSw} we obtain
\begin{equation}\label{chiEq}
\ddchi=-\frac{I_1}{I_2}v+\mu\,(v-c\dphi-\o^2\sin\phi)
=-\mu\,(c\dphi+\o^2\sin\phi)+\left(\mu-\frac{I_1}{I_2}\right)v.
\end{equation}
We impose a limit cycle on $\chi$ as in \eqref{LinClos} 
by using the control:
\begin{equation}\label{chiCont}
v=\frac{1}{\mu-\frac{I_1}{I_2}}\Big(\mu\,(c\dphi+\o^2\sin\phi)-\O^2\chi-\kappa(\pfi-r^2)\dchi\Big)\ \text{ with }\ \pfi=\frac12\dchi^2+\frac12\O^2\chi^2,
\end{equation}
and obtain the closed loop system
\begin{align}\label{chiBalSw}
\begin{cases}
\ddphi+\frac{1}{1-\frac{I_2}{I_1}\mu}c\dphi+\frac{1}{1-\frac{I_2}{I_1}\mu}\o^2\sin\phi=\frac{1}{1-\frac{I_2}{I_1}\mu}\frac{I_2}{I_1}\left(\O^2\chi+\kappa(\pfi-r^2)\dchi\right)\\
\ddchi+\kappa(\pfi-r^2)\dchi+\O^2\chi=0.
\end{cases}
\end{align}
We need to select $\mu<\frac{I_1}{I_2}$ for the damping to be positive. In the steady state, $\chi$ is oscillating harmonically with the frequency $\O$ and the amplitude $\frac{r\sqrt{2}}{\O}$, and drives the equation for $\phi$ with the scaled amplitude $\frac{1}{1-\frac{I_2}{I_1}\mu}\frac{I_2}{I_1}r\O\sqrt{2}$. 
\begin{remark} The interpretation of $\chi$ is most transparent when $\mu=0$ and $\chi=\phi+\theta$ (this corresponds to choosing the natural gains in the linearized control of the previous section), and when $\mu=-1$ and $\chi=\theta$. Moreover, taking $\mu\to-\infty$ corresponds to imposing the limit cycle on $\phi$ directly, in which case the two equations in \eqref{chiBalSw} collapse into a single one, and the zero dynamics is unbounded, as we already saw in Section ``Playground swing: equations of motion and naive controls". 
\end{remark}
Thus, in the steady-state limit our problem reduces to driving a damped pendulum by a harmonic excitation
\begin{equation}\label{muDDPend}
\ddphi+\frac{1}{1-\frac{I_2}{I_1}\mu}c\dphi+\frac{1}{1-\frac{I_2}{I_1}\mu}\o^2\sin\phi=\frac{1}{1-\frac{I_2}{I_1}\mu}\frac{I_2}{I_1}r\O\sqrt{2}\cos(\O t+p).
\end{equation}

In the case of small amplitudes $\sin\phi\approx\phi$ we have a driven harmonic oscillator as in the linearized problem \eqref{LinClos}. The steady-state oscillations then have the frequency $\O$ and the amplitude 
\begin{equation}\label{MaxPhiSmall}
{\ds \Phi=\frac{\frac{I_2}{I_1}\,r\O\sqrt{2}}{\sqrt{\left(\o^2-(1-\frac{I_2}{I_1}\mu)\,\O^2\right)^2+c^2\O^2}}}.
\end{equation}
We can select any desired frequency $\O$ and any desired positive damping coefficient by adjusting $\mu$. Then $r$ can be adjusted to set the amplitude $\Phi$. The energy transfer is most efficient at the resonance, when the expression under the square root is minimized, which gives: 
\begin{equation}\label{LinRes}
\O=\frac{\o}{\sqrt{1-\frac{I_2}{I_1}\mu}}\,\sqrt{1-\frac{c^2}{2\o^2\left(1-\frac{I_2}{I_1}\mu\right)}}\,,
\end{equation} 
and this is a natural choice for $\O$, leaving only $r$ and $\mu$ to select.

As before, we can accommodate constraints on the range of $\theta$. It is easiest to do by choosing $\mu=-1$ which gives $\chi=\theta$, we simply choose $\pfi_\s=\frac12\dth^2+\frac12\O^2(\theta-\s)^2$, with some prescribed shifted middle angle $\s$, as the energy function. Then the equation of motion \eqref{LimCyc} for $\theta$ becomes
\begin{equation}\label{Humth}
\ddth+\kappa(\pfi_\s-r^2)\,\dth+\O^2(\theta-\s)=0.
\end{equation}
The amplitude of the steady state oscillations is still given by $\theta_{\max}=\frac{r\sqrt{2}}{\O}$, but they are now centered at $\s$ rather than at $0$, and hence the range of angles is $[\s-\theta_{\max},\s+\theta_{\max}]$. 
\begin{figure}[!ht]
\centering
(a) \includegraphics[width=2.9in]{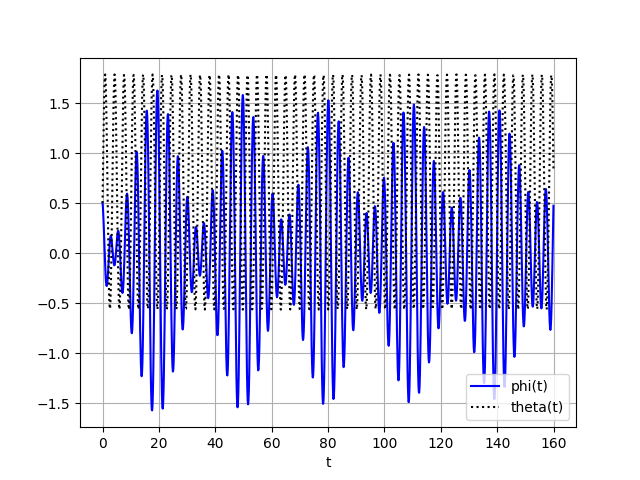} 
(b) \includegraphics[width=2.9in]{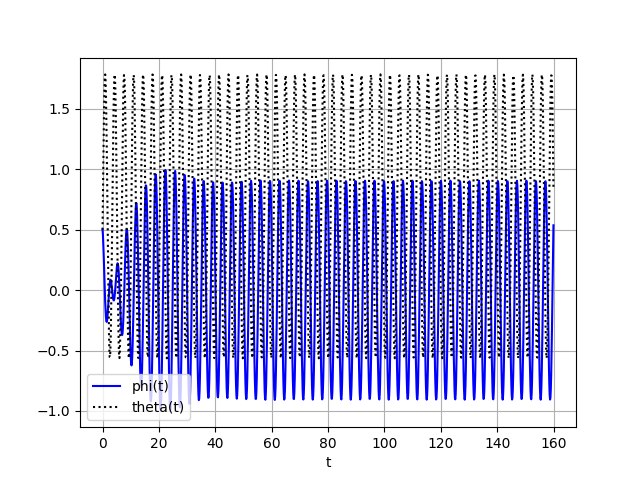}
\\
(c) \includegraphics[width=2.9in]{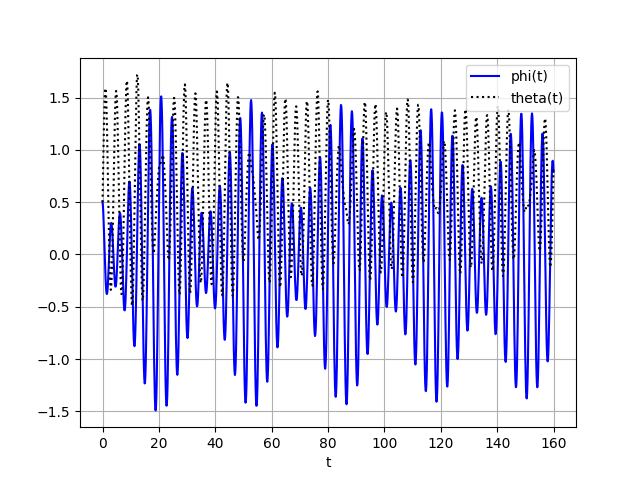} 
(d) \includegraphics[width=2.9in]{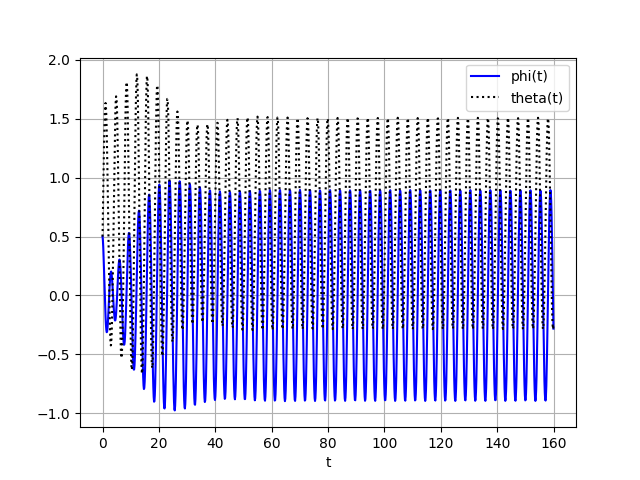}
\caption{\label{DDPendSmall} Graphs of $\phi(t)$ (solid) and $\theta(t)$ (dotted) for balanced swing with the driven damped pendulum control for (a) $\mu=-1$, $\alpha=30$;  (b) $\mu=-1$, $\alpha=100$; (c) $\mu=-2$, $\alpha=30$; (d) $\mu=-2$, $\alpha=100$. Other parameters are $m_1=1$, $m_2=m_3=35$, $l_1=2.5$, $l_2=l_3=0.9$, $\o=1.98$, $\O=1.9$, $\kappa=100$, $r=0.7$. 
} 
\end{figure} 

By choosing $\s=\theta_{\max}=45^\circ$ we get rotations between $0^\circ$ and $90^\circ$ characteristic for swings with back rests, and by choosing $\s=35^\circ$, $\theta_{\max}=55^\circ$ we get rotations between $-20^\circ$ and $90^\circ$ that humans can achieve on swings with flat seats without falling over. Perhaps, one can lean even further back by holding on tight to the swing chains or ropes. The corresponding control is given by \begin{equation}\label{thCont}
v=\frac{1}{1+\frac{I_1}{I_2}}\Big(c\dphi+\o^2\sin\phi+\O^2(\theta-\s)+\kappa(\pfi_\s-r^2)\dth\Big),
\end{equation}
and the zero dynamics equation for the swing angle is 
\begin{equation}\label{shDDPend}
\ddphi+\frac{1}{1+\frac{I_2}{I_1}}c\dphi+\frac{1}{1+\frac{I_2}{I_1}}\o^2\sin\phi=\frac{1}{1+\frac{I_2}{I_1}}\frac{I_2}{I_1}\Big(r\O\sqrt{2}\cos(\O t+p)-\O^2\s\Big).
\end{equation}

Compared to \eqref{shDDPend} (with $\mu=-1$) we have an additional constant driving term $-\frac{1}{1+\frac{I_2}{I_1}}\frac{I_2}{I_1}\,\O^2\s$. Its effect is easy to understand intuitively for small amplitudes when $\sin\phi\approx\phi$. In that case the shift can be subsumed into $\phi$, and we have regular oscillations around a shifted middle position $-\frac{I_2}{I_1}\,\frac{\O^2}{\o^2}\s$. Because typically $\O\approx\o$ and $I_2\ll I_1$ the shift in $\phi$ is barely noticeable in practice. This approximation remains valid roughly for $\phi_{\max}\lesssim 0.4$ or $23^\circ$.

Damped pendulum driven by a harmonic excitation is one of the simplest, and best studied, mechanical systems that can exhibit chaotic behavior \cite{BB,BG}. For small driving amplitudes and frequencies it is well approximated by driven damped harmonic oscillator, which goes into resonance when the driving and the natural frequencies are close to each other. The dynamics remains qualitatively similar even as the approximation is no longer accurate for somewhat larger values. But this does not remain the case forever. As the driving frequency or amplitude are increased, the pendulum starts behaving differently in a dramatic fashion, its motion undergoes a cascade of period doubling bifurcations. Period doubling means that the oscillations no longer have a single dominant frequency. It splits in two, and the graph of the pendulum angle as a function of time looks like a wave of one frequency modulated by another, the pattern known as \textbf{\textit{beats}}, see Figure \ref{DDPendSmall}\,(a),\,(c). As more and more frequencies are superimposed, the motion becomes increasingly erratic. The period doubling cascade ends in fully chaotic behavior, which, surprisingly, gives way back to regular oscillations at still higher values of the frequency or the amplitude, but only for narrow ranges of their values \cite[3.3.5]{BG}. 

Our simulations show that for realistic values of lengths and masses the closed loop driven damped pendulum falls within the range of regular oscillations (and hence satisfies our design criteria) only for relatively large values of the damping coefficient $\alpha$, see Figure \ref{DDPendSmall}. For smaller values of $\alpha$, the steady state oscillations display the pattern of beats, a wave with one frequency modulated by another. It appears that the closed loop system undergoes a period doubling bifurcation as $\alpha$ is decreased, past which this type of control is no longer satisfactory. 
\begin{remark}
The appearance of beats is a price of having a more analytically transparent closed loop system \eqref{chiBalSw}. It gives us one less control parameter than the feedback linearization control ansatz. While $\mu$ allows us to increase the damping in the closed loop system \eqref{muDDPend}, it can only do so by also increasing its ``stiffness" (the coefficient of $\sin\phi$) at the same time, which leads to the beats. The only way to have damping large enough without too much spurious oscillation is to have it large enough to begin with, before the control is applied. \end{remark}
To interpret our results, 
we rescale the time variable to $\tau=kt$ with $k^2=\frac{\o^2}{1-\frac{I_2}{I_1}\mu}$ and convert \eqref{muDDPend} into the dimensionless form
\begin{equation}\label{StDDPend}
\ddphi+\d\dphi+\sin\phi=A_D\cos(\O_D t+p), 
\end{equation}
where 
\begin{equation}\label{StPar}
\d:=\frac{c}{\o\sqrt{1-\frac{I_2}{I_1}\mu}};\ \ \ \ \  
A_D:=\frac{I_2}{I_1}\frac{r\O\sqrt{2}}{\o^2};\ \ \ \ \ 
\O_D:=\frac{\O}{\o}\sqrt{1-\frac{I_2}{I_1}\mu}.
\end{equation} 
In terms of the dimensionless parameters the amplitude formula \eqref{MaxPhiSmall} simplifies to
\begin{equation}\label{MaxPhiHarm}
{\ds \Phi=\frac{A_D}{\sqrt{\left(1-\O_D^2\right)^2+\d^2\O_D^2}}},
\end{equation}
Note that the choice $\O_D=1$ corresponds to the resonance in the absence of damping, and \eqref{LinRes} reduces to $\O_D=\sqrt{1-\frac{\d^2}2}$. Regular oscillations, for which this formula is valid, take place for small $A_D$ and $\delta$, with $\O_D$ close to $1$. 

A detailed semi-analytic investigation of the onset of chaos in a harmonically driven damped pendulum by perturbation methods can be found in \cite{Mil}. It is based on Fourier approximation with a dominant mode. It turns out that in the next order of approximation the amplitude formula \eqref{MaxPhiHarm} has to be corrected by replacing $1$ in the denominator with an expression that depends on the amplitude $\Phi$, and so it turns into an implicit equation for $\Phi$. This equation (see \eqref{MaxPhiInp} below) defines curves in the $\Phi$-$\O_D$ plane called the \textbf{\textit{resonance curves}}. They are plotted in \cite{Mil} for some typical parameter values, and have two branches, normal and anomalous. 

There is a range of parameter values where the normal branch represents stable regimes and resembles the resonance curve for the driven damped harmonic oscillator given by \eqref{MaxPhiHarm}. This is the primary range of regular oscillatory behavior. Increasing $A_D$ deforms the resonance curves into multi-valued graphs, and regular oscillations destabilize, setting off the period doubling cascade where the dominant mode approximation is no longer applicable. But it can be used to analytically estimate the threshold values where the onset of the period doubling occurs. We will use the following result from \cite{Mil}.
\begin{theorem}\label{DDPendMil} Suppose that $0\leq \d\leq 0.592$, $\sqrt{1-\frac{\d^2}2}\leq\O_D\leq1$,  and
\begin{equation}\label{ADRange}
A_D\leq3.16\,\d(1-1.16\,\d^2).
\end{equation}
Then the steady state motion of the driven damped pendulum \eqref {StDDPend} is anharmonic oscillation with the dominant frequency $\O_D$ and the amplitude $\Phi$ implicitly determined, to the first order of approximation, by
\begin{equation}\label{MaxPhiInp}
{\ds \Phi=\frac{A_D}{\sqrt{\left(2\frac{J_1(\Phi)}{\Phi}-\O_D^2\right)^2+\d^2\O_D^2}}},
\end{equation}
where $J_1$ is the first Bessel function.
\end{theorem}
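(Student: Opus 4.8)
The plan is to follow the single-dominant-mode (harmonic balance) method of \cite{Mil}. First I would posit a steady-state response $\phi(t)=\Phi\cos\tau$ with $\tau:=\O_D t+p-\vartheta$, where $\Phi$ is the amplitude to be determined and $\vartheta$ is the phase lag relative to the forcing; in these variables the excitation in \eqref{StDDPend} reads $A_D\cos(\tau+\vartheta)$. The one nontrivial ingredient is handling the nonlinearity $\sin(\Phi\cos\tau)$, for which I would invoke the Jacobi--Anger expansion
$$
\sin(\Phi\cos\tau)=2\sum_{k=0}^{\infty}(-1)^kJ_{2k+1}(\Phi)\cos((2k+1)\tau),
$$
so that its fundamental component is exactly $2J_1(\Phi)\cos\tau$ while all remaining terms are odd harmonics $3\O_D,5\O_D,\dots$. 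These higher harmonics are precisely what make the steady motion anharmonic, yet they are subdominant, which is what justifies the claim of a single dominant frequency $\O_D$.

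Next I would substitute the ansatz into \eqref{StDDPend}, using $\dphi=-\Phi\O_D\sin\tau$ and $\ddphi=-\Phi\O_D^2\cos\tau$, and balance the fundamental harmonic by equating the coefficients of $\cos\tau$ and of $\sin\tau$ separately while dropping the higher harmonics, which is exactly the first-order truncation. This yields the two real relations
$$
2J_1(\Phi)-\Phi\O_D^2=A_D\cos\vartheta,\qquad \d\,\Phi\,\O_D=A_D\sin\vartheta.
$$
Eliminating the phase lag $\vartheta$ by squaring and adding gives $\big(2J_1(\Phi)-\Phi\O_D^2\big)^2+\d^2\Phi^2\O_D^2=A_D^2$; dividing through by $\Phi^2$ and solving for $\Phi$ produces \eqref{MaxPhiInp}. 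This recovers the amplitude relation and, in the small-amplitude limit where $2J_1(\Phi)/\Phi\to1$, collapses to the linear resonance formula \eqref{MaxPhiHarm}, as it should.

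The substantive part, and the main obstacle, is establishing the three validity ranges, since the truncated balance is meaningful only where the single-mode solution is dynamically stable, that is, before the period-doubling cascade sets in. To pin this down I would linearize \eqref{StDDPend} about the dominant-mode solution $\phi_0(t)=\Phi\cos\tau$, obtaining for the perturbation $\eta$ a damped Hill-type equation $\ddot\eta+\d\dot\eta+\cos(\phi_0(t))\,\eta=0$ whose coefficient $\cos(\Phi\cos\tau)$ is periodic and, again via Jacobi--Anger, expands in even harmonics with constant part $J_0(\Phi)$ and leading parametric term $-2J_2(\Phi)\cos2\tau$. Floquet analysis of this damped Mathieu-like equation locates the first parametric (subharmonic) instability tongue, and the requirement that the operating point lie below that tongue is what yields the explicit inequalities $0\le\d\le0.592$, $\sqrt{1-\d^2/2}\le\O_D\le1$, and \eqref{ADRange}.

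Extracting the precise numerical constants $0.592$, $3.16$, and $1.16$ requires carrying the perturbation expansion to the order at which the instability threshold first appears, and this is the delicate, computation-heavy step. In practice I would quote these thresholds from the detailed analysis in \cite{Mil} rather than rederive them, treating Theorem \ref{DDPendMil} as an imported result whose role here is to certify the regime in which the driven-pendulum closed loop \eqref{muDDPend} remains within regular anharmonic oscillations and hence meets our design criteria.
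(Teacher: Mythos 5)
Your proposal is correct and matches the paper's treatment: the paper gives no proof of Theorem \ref{DDPendMil} at all, importing it verbatim from \cite{Mil}, whose method it describes exactly as you reconstruct it --- Fourier (dominant-mode) approximation, which via the Jacobi--Anger expansion and harmonic balance yields the implicit amplitude equation \eqref{MaxPhiInp}. Your deferral of the numerical thresholds $0.592$, $3.16$, $1.16$ to the stability analysis in \cite{Mil} is consistent with how the paper itself uses the result, so there is nothing to add.
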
 
\noindent It is the amplitude condition \eqref{ADRange} that cannot be satisfied when the value of $\alpha$ (and hence of $\delta$) is small. For our choice of swing parameters and $\mu=-2$ the threshold value for falling within the regular oscillatory range is roughly $\alpha\approx40$. Because we only have one control gain $\mu$ to vary, we cannot increase $\delta$ independently of $\O_D$ and $A_D$. Moreover, values of $\mu$ close to the $\frac{I_1}{I_2}$ threshold, that increase $\delta$, lead to numerical instability in simulations. On the other hand, if we keep both gains $\l_1,\l_2$  the asymptotic closed loop dynamics is no longer described by the driven damped pendulum or some other known system, see \eqref{RWDClosed}.

\section{Unbalanced swing}\label{UnBal}

Up to this point we assumed that the swinger is fully balanced, i.e. her center of mass is on the swing's axis. When this is not so, rotating the body (modeled by the attached dumbbell), aside from the momentum transfer, has the additional effect of moving the center of mass up and down. This additional energy pumping mechanism is used exclusively when pumping up the swing from the {\it standing position} \cite{WRR,LaFo}. There is, however, a tradeoff. Pumping the swing up by moving the center of mass is most effective near the midswing, whereas the angular momentum transfer is most effective when it is near the turning points. Since in our dumbbell model both moves can only be accomplished by rotation the two strategies cannot be used simultaneously. 

With the $N$ terms included the equations of motion \eqref{ELDamp} become
\begin{align}\label{UnBalSw}
\begin{cases}
(I_1+Nl_1\cos\theta)\ddphi+(I_2+Nl_1\cos\theta)(\ddphi+\ddth)+Nl_1\sin\theta(\dphi+\dth)^2-Nl_1\sin\theta\dphi^2\\
\hspace{19em} +\alpha\dphi+Mgl_1\sin\phi+Ng\sin(\phi+\theta)=0\\
I_2(\ddphi+\ddth)+Nl_1\cos\theta\ddphi+Nl_1\sin\theta\,\dphi^2+Ng\sin(\phi+\theta)=u-\beta\dth.
\end{cases}
\end{align} 
Considering the prevalence of the angle sum $\phi+\theta$ and its derivatives in the above equations it is natural to introduce $\xi:=\phi+\theta$ as a new variable. In terms of $\xi$, the equations transform into (we keep $\theta=\xi-\phi$ for brevity)
\begin{align}\label{UnBalSwPsi}
\begin{cases}
(I_1+Nl_1\cos\theta)\ddphi+(I_2+Nl_1\cos\theta)\ddxi+Nl_1\sin\theta(\dxi^2-\dphi^2)\\
\hspace{19em} +\alpha\dphi+Mgl_1\sin\phi+Ng\sin\xi=0\\
I_2\ddxi+Nl_1\cos\theta\ddphi+Nl_1\sin\theta\,\dphi^2+Ng\sin\xi=u-\beta\dth.
\end{cases}
\end{align} 
Up to the damping terms and reinterpretation of constants, these are the equations of motion of the Acrobot \cite{Mur,SCL,ZCM}. Subtracting the second equation from the first simplifies the latter to:
\begin{equation}\label{UnBalSwPhi}
I_1\ddphi+\alpha\dphi+Mgl_1\sin\phi+Nl_1(\cos\theta\,\ddxi-\sin\theta\,\dphi^2)=-(u-\beta\dth).
\end{equation}
\begin{remark} We will assume that the imbalance $N=m_2l_2-m_3l_3$ is small, or more precisely, $Nl_1\ll I_1$, but not necessarily $Nl_1\ll I_2$, which is typical for human swingers. This means that $Nl_1$ terms in \eqref{UnBalSwPhi} can be neglected, as they are divided by $I_1$, but should be kept in the second equation in \eqref{UnBalSwPsi}. 
\end{remark}
If we then eliminate $\ddphi$ from it and neglect the $Nl_1/I_1$ term the simplified system becomes
\begin{equation}\label{UnBalSwNeg}
\begin{cases}
I_1\ddphi+\alpha\dphi+Mgl_1\sin\phi\approx-(u-\beta\dth)\\
I_2\ddxi\approx u-\beta\dth-Ng\sin\xi-Nl_1\sin\theta\,\dphi^2.
\end{cases}
\end{equation}
Since $\psi:=-\frac{I_2}{I_1}(\phi+\theta)=-\frac{I_2}{I_1}\xi$ we can rewrite the last system as a perturbation of \eqref{psiBalSw}:
\begin{align}\label{psiUnBalSw}
\begin{cases}
\ddphi+c\dphi+\o^2\sin\phi\approx v\\
\ddpsi\approx v-N\left(\frac{g}{I_1}\sin\left(\frac{I_1}{I_2}\psi\right)-\frac{l_1}{I_1}\sin\left(\frac{I_1}{I_2}\psi+\phi\right)\dphi^2\right).
\end{cases}
\end{align}
\begin{figure}[!ht]
\centering
(a) \includegraphics[width=2.9in]{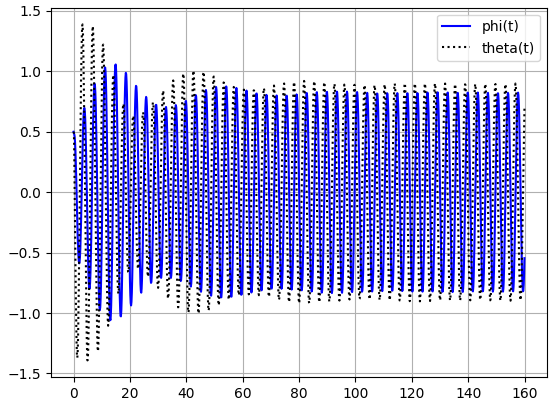} 
(b) \includegraphics[width=2.9in]{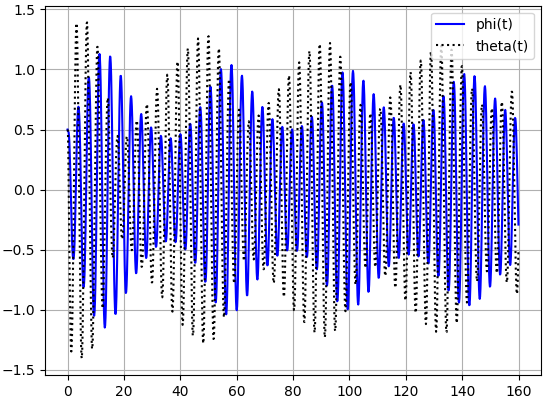}
\caption{\label{UnBalGraphs} Graphs of $\phi(t)$ (solid) and $\theta(t)$ (dotted) for unbalanced swing with the driven damped pendulum control for (a) $m_2=40$, $m_3=30$, $I_2=59.2$, $N=16$  and (b) $m_2=50$, $m_3=20$, $I_2=62.8$, $N=32$. Other parameters are $M=71$, $I_1=443.75$, $l_1=2.5$, $l_2=1$, $l_3=0.8$, $\alpha=100$, $\mu=-2$, $\o=1.98$, $\O=1.75$, $\kappa=100$, $r=1.55$.} 
\end{figure} 

From our experience with the balanced swing we should impose a limit cycle on $\psi$ (or its linear combination with $\phi$) and then consider the behavior of $\phi$ as driven by the resulting output. Supposing for simplicity (this corresponds to taking $\mu=0$ in Section ``From balanced swing to driven damped pendulum")
$$
\ddpsi+\kappa(\pfi-r^2)\dpsi+\O^2\psi=0\ \text{with}\ \pfi=\frac12\dpsi^2+\frac12\O^2\psi^2,
$$ 
as before, we obtain the control law
$$
v=-\O^2\psi-\kappa(\pfi-r^2)\dpsi+N\left(\frac{g}{I_1}\sin\left(\frac{I_1}{I_2}\psi\right)-\frac{l_1}{I_1}\sin\left(\frac{I_1}{I_2}\psi+\phi\right)\dphi^2\right).
$$
The closed loop equation for $\phi$ in the steady state is a perturbation of \eqref{chiBalSw} with $\mu=0$ and $\chi=-\frac{I_1}{I_2}\psi$:
\begin{equation}\label{UnBalSwPhiNeg}
\ddphi+c\dphi+\o^2\sin\phi\approx
-\O^2\psi+N\left(\frac{g}{I_1}\sin\left(\frac{I_1}{I_2}\psi\right)-\frac{l_1}{I_1}\sin\left(\frac{I_1}{I_2}\psi+\phi\right)\dphi^2\right).
\end{equation}
When $l_1\ll I_1=Ml_1^2$ and for small angles $\sin\psi\approx\psi$ this reduces to a harmonically driven damped pendulum, but with the driving amplitude corrected by an $N$ dependent term. 

Simulations implementing this control law are shown on Figure \ref{UnBalGraphs}.
For the smaller value of the imbalance $N=16$ we get regular swinging after a short transient, but when the imbalance is increased to $N=32$ this control produces beats despite the large baseline damping in the upper joint. At this value, the closed loop is an externally and parametrically driven pendulum, and the parametric effects can no longer be neglected when designing the control.

\section{Conclusions}\label{Conc}

We constructed two types of static state feedback torque controls for sustained oscillations of the balanced swing with a single dominant frequency. The first construction uses a combination of feedback linearization at zero damping in the upper joint, when the equations of motion coincide with those of the reaction wheel pendulum, with a novel static state feedback construction of a limit cycle for linear systems in the Brunovsky canonical form. The resulting feedback law performs well also for relatively small non-zero damping. The second construction is based on relating the balanced swing to harmonically driven damped pendulum, but the resulting control performs well (without beats and chaos) only for relatively large values. Both controls can be modified to shift the center of body rotations, so that the angles vary between two specified bounds. In particular, the angles can be kept positive to mimic human limitations and swings with a back bench. We also studied the relationship between these two controls in the limit of small swing amplitudes, and looked at the complications that arise when the swing is unbalanced. Balanced controls do seem to work for small imbalances, but physically (and mathematically) the unbalanced swing problem is much more complex.

To make analysis tractable, our approach was to find controls that allow to decompose the closed loop system into an autonomous subsystem with a stable limit cycle and a dependent subsystem driven by its output. Decompositions of this sort, other than the ones we found, would be useful, particularly those that behave well for the intermediate range of damping values. Energy functions other than the quadratic ones we used are also worth exploring. But even more desirable are alternative approaches to producing controls with analytically tractable closed loop systems, especially since human swinging may not be amenable to such a decomposition, see below.

\subsection{Feedback linearization}

It is of interest to investigate whether {\it partial} feedback linearization \cite{Spong,ZCM,Isid}, that was used for some control tasks for the Acrobot, can be adapted to generate limit cycles. Just like the Acrobot, the damped balanced swing has the relative degree three \cite[4.1]{Isid}, i.e. a three dimensional subsystem of it can be feedback linearized. The remaining degree of freedom can be made asymptotically stable, but it is not immediately clear what to do with it when sustained oscillations are the goal instead. The Theorem \ref{nDLimCyc} type construction does not seem to go through when the linearization is only partial.

Other variations on feedback linearization are of ineterst as well. A transformation that ``almost" feedback linearizes the Acrobot (with a small residual non-linearity) is defined in \cite{ZCM}. Perhaps it can be combined with a suitable generalization of our limit cycle construction to produce a torque control of the unbalanced swing without restrictions on the value of the imbalance. Approximate feedback linearization of the Acrobot in \cite{Mur} suggests another possible approach via hybrid controls.

\subsection{Control efficiency}

Our analysis had to be complemented by trial and error adjustment of control parameters to produce ``good behavior" for realistic swings. Part of the reason is that a more precise characterization of relevantly ``good" behavior is hard to spell out. Clearly, the resulting dynamics has to be bounded, but it is unclear how to define a  set of outputs that matches the intuition of ``regular oscillations" of the sort seen on a playground {\it and} has a precise analytic description with nice technical properties.

This also complicates studying control efficiency. We used efficiency considerations qualitatively to set up our design criteria and select parameter values, but a quantitative framework is much more preferable. Optimal control of the swing was considered for swinging in the standing position \cite{LaFo,Kul}, which is modeled by reducing the swing to a single mass moving up and down the swing axis, with its position as the control input. Extending the results to swinging in the sitting position, and with torques as control inputs, promises to be much more challenging even in the balanced case. It would be interesting to see what types of state constraints and cost functionals would produce optimal control with a limit cycle in the closed loop system, and whether we get ``regular oscillations" automatically, or some regularity constraint has to be imposed by hand. One possible choice is to include the Lyapunov function used in \eqref{LimCyc} into the cost functional. Heuristic considerations in \cite{WRR} suggest that for something like energy minimization with a lower bound on the amplitude the optimal control might be singular, with instant rotation at the turning points, but this already presupposes trajectories of a form for which the notion of ``amplitude" makes sense.

There is an additional complication in the unbalanced case. The two energy pumping mechanisms (center of mass motion and angular momentum transfer) interact in inscrutable ways, and work at cross purposes. The optimal time for the momentum transfer is at the turning points, while for the center of mass pumping it is a shift at the midswing \cite{WRR}. The most efficient control would have to be some non-linear mixture of the two, or, perhaps, even some quaint maneuvering that humans do not and/or can not perform. Although we took human swinging as the prototype, control parameters in our simulations can be selected so that the ``body" dumbbell oscillations are quite quirky compared to the ones observed on a playground. For example, the dumbbell may perform $180^\circ+$ flips near one of the turning points while the swing goes through regular oscillations unperturbed. This suggests that robotic devices designed to swing swings do not necessarily have to imitate humans, especially if energy efficiency, or some other technical considerations, are important.

\subsection{Human swinging}

In case one {\it is} interested in understanding the biophysics of human swinging, it is worth pointing out that even our shifted controls only partially ``fake" human behavior. To name a couple of discrepancies, there is no characteristic delay between reaching the turning point and the body's reaction, and the induced dumbbell rotations are too gradual. 

Ideally, one should instantaneously rotate the body (throw the head back and feet forward) from, say, $0^\circ$ to $90^\circ$  at the start of the forward swing, and rotate it back for the backward swing, instead of spreading out the motion harmonically over the entire half-swing, as in \eqref{Humth}. Human swingers are not ideal, and react with delay, but their repositioning still does not take the entire half-swing. The body is rotated swiftly near the turning points, and the swing oscillates naturally in between. This is similar to the workings of cogwheel escapements in old school mechanical clocks \cite{Bl1}. Since those are often modeled using piecewise smooth or van der Pohl limit cycles \cite{JK} those may provide a better fit for human swinging as well.

While human swinging maneuvers are simple enough to describe in terms of the body angle, reverse engineering torques that accomplish them is structurally more complex than the static state feedback philosophy we followed so far. We can implement the maneuvering described above by specifying the requisite torque $\tau$ for $\theta$. However, as the timing of the moves requires tracking $\dphi$, $\tau$ will depend on it, so $\tau=\tau(\dphi,\theta)$ and the equation for $\theta$ will not be autonomous. 

And this is not the end of it. As described, the procedure would pump energy into the swing endlessly resulting in unbounded growth of the swinging amplitude. Preventing that is simple enough for human swingers -- simply sit out a cycle when the swing up is too high -- then resume when damping reduces the amplitude sufficiently. However, implementing such a control directly requires memory, and the closed loop will no longer be described by a differential equation. A static state feedback workaround is to track swing's total energy instead, i.e. stop applying any torque whenever the energy exceeds a chosen threshold.

But even this static state feedback imitation of human control is not of the form that naturally flows from the standard methods of control theory. The reason is simple enough, the closed loop system does not have an autonomous subsystem, or some other apparent analytic backdoor, and is harder to analyze and predict as a result.
Our tentative attempts to experiment with human inspired controls in simulations were inconclusive because the relatively abrupt changes in the body angle produced torques that made the equation for swing angle numerically unstable.

\noindent{\large\bf Acknowledgments:} The authors are grateful to the anonymous reviewers for insightful critique of and many helpful suggestions to the original draft of the paper that greatly improved the final version.

\bibliographystyle{IEEEtran}
\bibliography{SwingRef}

\pagebreak
\thispagestyle{empty}

\end{document}